\begin{document}
\title{QoE-Driven UAV-Enabled Pseudo-Analog Wireless Video Broadcast: A Joint Optimization of Power and Trajectory}
\author{Xiao-Wei~Tang,~\IEEEmembership{Student Member,~IEEE}, Xin-Lin~Huang*,~\IEEEmembership{Senior Member,~IEEE}, and Fei Hu,~\IEEEmembership{Member,~IEEE}
\thanks{Xiao-Wei Tang (email: {\tt xwtang@tongji.edu.cn}) is with the Department of Control Science and Engineering, Tongji University, Shanghai 201804, China.}
\thanks{Xin-Lin Huang (email: {\tt xlhuang@tongji.edu.cn}) is with the Department of Information and Communication Engineering, Tongji University, Shanghai 201804, China (Corresponding Author).}
\thanks{Fei Hu (e-mail: {\tt fei@eng.ua.edu}) is with the Department of Electrical and Computer Engineering, University of Alabama, Tuscaloosa, AL 35487 USA.}
}
\markboth{}{Tang \MakeLowercase{\textit{et al.}}: QoE-Driven UAV-Enabled Pseudo-Analog Wireless Video Broadcast: A Joint Optimization of Power and Trajectory \ldots}
\maketitle
\maketitle
\begin{abstract}
The explosive demands for high quality mobile video services have caused heavy overload to the existing cellular networks. Although the small cell has been proposed to alleviate such a problem, the network operators may not be interested in deploying numerous base stations (BSs) due to expensive infrastructure construction and maintenance. The unmanned aerial vehicles (UAVs) can provide the low-cost and quick deployment, which can support high-quality line-of-sight communications and have become promising mobile BSs. In this paper, we propose a quality-of-experience (QoE)-driven UAV-enabled pseudo-analog wireless video broadcast scheme, which provides mobile video broadcast services for ground users (GUs). Due to limited energy available in UAV, the aim of the proposed scheme is to maximize the minimum peak signal-to-noise ratio (PSNR) of GUs’ video reconstruction quality by jointly optimizing the transmission power allocation strategy and the UAV trajectory. Firstly, the reconstructed video quality at GUs is defined under the constraints of the UAV’s total energy and motion mechanism, and the proposed scheme is formulated as a complex non-convex optimization problem. Then, the optimization problem is simplified to obtain a tractable suboptimal solution with the help of the block coordinate descent model and the successive convex approximation model. Finally, the experimental results are presented to show the effectiveness of the proposed scheme. Specifically, the proposed scheme can achieve over 1.6dB PSNR gains in terms of GUs’ minimum PSNR, compared with the state-of-the-art schemes, e.g., DVB, SoftCast, and SharpCast.
\end{abstract}
\begin{IEEEkeywords}
Unmanned aerial vehicles, video broadcast, peak signal-to-noise ratio, and joint power and trajectory optimization.
\end{IEEEkeywords}

\IEEEpeerreviewmaketitle
\section{Introduction}
\IEEEPARstart {A}{ccording} to Cisco's latest report [1], mobile video traffic accounted for $59\%$ of global mobile data traffic in 2017 and will reach $79\%$ by 2022. In particular, $80\%$ of the mobile video traffic belongs to hotspot contents (e.g., the American Super Bowl, the World Cup, etc.) which are usually requested by many users simultaneously. How to efficiently deliver these hotspot videos to ground users (GUs) becomes a non-trivial problem. Point-to-point (P2P) communication is first excluded due to its high power consumption and low bandwidth utilization. Broadcast communication can efficiently save the bandwidth and the transmission power, which makes it seem like a better choice compared with P2P [2]. In the conventional digital video broadcast (DVB) system, the base station (BS) transmits the video at the rate which matches the throughput of the GU with the worst channel quality to ensure that most users can successfully decode the video. However, the cliff effect may occur in DVB when GUs' channel qualities vary a lot [3]. The pseudo-analog video transmission (PAVT) techniques have been proposed to effectively alleviate the cliff effect [4-8]. Specifically, PAVT adopts the joint source-channel coding instead of separate source and channel coding [9,10]. Since video signals are processed linearly in PAVT, the video demodulation quality is approximately proportional to the corresponding channel quality. Therefore, PAVT can provide continuous scalable video quality, which makes it suitable for video broadcast [11,12].

Many novel schemes on PAVT have been proposed in recent years. In [7], Katabi et al. proposed a cross-layer design for wireless video broadcast named SoftCast which was the first work on PAVT. In [8], He et al. proposed a structure-preserving video delivery system named SharpCast to improve both the objective and subjective visual quality. In [9], a PAVT scheme called D-Cast was proposed where the correlation among videos could be fully utilized. In [10], a data-assisted cloud radio access PAVT network named DAC-RAN was proposed, which separated the control and data planes in the conventional digital transmission infrastructure, and integrated a new data plane into the virtual BS. In [11], Huang et al. proposed a knowledge-enhanced wireless video transmission system called KMV-Cast which could exploit the hierarchical Bayesian model to integrate the correlated information into the video reconstruction.

The above PAVT schemes provide new feasible solutions to video broadcast and users can enjoy the video qualities matching their own channel qualities. However, PAVT schemes can not solve this problem, that is, cell-edge users far away from BSs suffer from unsatisfactory quality of experience (QoE) due to poor channel qualities [13,14]. Although the cellular offloading technique has been proposed to alleviate such a problem, its high cost of deploying new BSs makes it unsuitable for the scenarios with temporary traffic (e.g., an on-demand concert) [15,16]. Furthermore, the cellular offloading technique is usually applied in the conventional network architectures with fixed infrastructure such as ground BSs, access points and relays, which makes it difficult to serve GUs in an energy-efficient manner.

Nowadays, unmanned aerial vehicles (UAVs), which can provide Internet access from the sky, have been considered as a promising solution to improve the communication quality of cell-edge GUs. Particularly, the technology advances in aviation and artificial intelligence have enhanced the functions of UAVs and made them smaller, lighter, and smarter [17-19]. Therefore, UAVs have been widely used to implement complex tasks such as post-disaster reconnaissance, agriculture precision, cargo transportation, etc., due to their advantages including rapid deployment, high flexibility, controllable mobility, and  easily available line-of-sight (LoS) channels [20-24]. Specially, LoS channels can make the communication process suffer less path loss, shadowing and fading [25-27]. In addition, the coverage area of the UAV can be easily adjusted by changing the UAV's height, transmission power, and antenna orientations.

To the best of our knowledge, the existing studies on UAV-enabled video transmission are generally limited to the conventional digital systems, e.g., P2P and DVB. Therefore, these schemes inevitably suffer from some problems, e.g., lack of scalability/reliability, caused by the inherent defects of the digital systems. For example, in the UAV-enabled P2P video transmission scenario, the UAV usually stores multiple versions of the same video content encoded at different coding rates. The GU can only request the UAV for the video content which matches its instantaneous end-to-end throughput [28,29] or playback buffer status [30]. Switching between different video rates frequently inevitably causes visual quality fluctuations that affect the QoE [31,32]. Moreover, the UAV's high mobility often results in time-varying channel quality and network topology. The video demodulation quality reacts strongly to the link degradation when streaming videos over such unreliable channels, and a single packet loss may lead to video freezing for several seconds [33-37]. In addition, existing work simply assumes that the GUs' QoE only depends on the video transmission rate, and do not measure the video demodulation quality from the perspective of objective image evaluation metric.

Motivated by the above limitations of existing studies, we propose a QoE-driven UAV-enabled pseudo-analog wireless video broadcast (QUPWV-Cast) system, in which the UAV is dispatched as a mobile BS to provide video broadcast services for a group of GUs. The goal is to maximize the minimum peak signal-to-noise ratio (PSNR) of GUs' video demodulation quality by jointly optimizing the transmission power allocation strategy and the UAV trajectory. \\
\textbf{Contributions}: The main contributions of this paper are two-fold:
\begin{enumerate}
\item This is the first work to integrate the UAV technique into the PAVT system. UAV's advantages including high mobility, promising LoS channel, and fast deployment are fully utilized to enhance the QoE of cell-edge GUs, which provides a feasible solution to video broadcast from the sky.
\item The proposed QUPWV-Cast system is modelled as a non-convex optimization problem to maximize the minimum PSNR of GUs by jointly optimizing the transmission power allocation strategy and the UAV trajectory. Specifically, an effective optimization algorithm based on the block coordinate descent (BCD) and successive convex approximation (SCA) techniques is proposed to obtain the sub-optimal solution.
\end{enumerate}

The reminder of the paper is organized as follows. Section II introduces related work and fundamental knowledge on the proposed QUPWV-Cast. In Section III, the details of the system model are presented. In Section IV, an effective BCD and SCA-based algorithm is proposed to maximize the minimum PSNR of GUs. In Section V, the simulation results are provided to show the effectiveness of the proposed system. In Section VI, we conclude this paper.

{\bf{Notations}}: in this paper, italics represent scalars, bold lowercases represent vectors, and bold uppercase represents sets. ${{\mathbb R}^{M \times 1}}$ represents a $M$-dimensional real-value vector. For vector ${\bf{a}}$, $||{\bf{a}}||$ represents its Euclidean norm, and ${{\bf{a}}^T}$ represents its transpose. $E(\cdot)$ represents the expectation of a random variable.

\section{Related Work and Fundamental Knowledge}
In this section, we will first introduce related work on the UAV-enabled video transmission. Then, we will give fundamental knowledge on the PAVT scheme and UAV-GU channel model.

\subsection{UAV-Enabled Video Transmission}
UAVs can capture videos via their equipped sensors, and then deliver them to the GUs after compression and encoding. Therefore, UAVs play an important role in many video streaming applications such as live streaming, virtual reality/augmented reality, etc. These applications usually have high QoE requirements, such as low packet loss ratio (PLR) and ultra-high resolution [38-41]. Specifically, there are two key issues in the design and implementation of UAV-enabled wireless video transmission: 1) The UAV trajectory needs to be properly designed [42], so that the UAV can approach GUs as closely as possible to obtain better channel quality and reduce energy consumption; 2) The limited resource, e.g., energy, should be properly allocated during the UAV's flight [43,44], in order to maintain the QoE and reduce the communication outage probability.

Many studies have contributed to the development of UAV-enabled wireless video transmission by solving the above two key issues. In [45], Wu et al. investigated a UAV-enabled orthogonal frequency division multiple access (OFDMA) scheme, in which the UAV was adopted as a BS. A minimum-rate ratio (MRR) was defined for each GU to represent the minimum required instantaneous rate to maintain the average throughput. The goal was to maximize the minimum average throughput of all GUs by jointly optimizing the UAV trajectory and OFDMA resource allocation under the given constraints on GUs' MRRs. In [46], Zeng et al. studied a UAV-enabled multicasting system, in which a UAV transferred a common file to a set of GUs. The goal was to minimize the mission completion time by designing the UAV trajectory, while ensuring that each GU could successfully recover the file with a desired probability. In [47], Ludovico et al. improved the delay performance for video streaming applications in congested cellular macro-cells using a mobile micro-cell installed on a UAV. The mobile micro-cell was used to offload GUs from a congested macro-cell to optimize the bandwidth usage. In [48], He et al. designed an intelligent and distributed allocation mechanism to improve GUs' QoE. Each UAV in a cluster could independently adjust and select its video encoding rate to achieve flexible uplink allocation. They built a potential game model to maximize the GUs' QoE through a low-complexity distributed self-learning algorithm. In [49], Zhan et al. extended the UAV applications to adaptive streaming services over fading channels. The objective was to maximize the overall utility while guaranteeing the fairness among multiple GUs under the constraints of UAV energy budget and rate outage probability. In [50], Wu et al. considered UAV-enabled two-user broadcast channel, where a UAV was deployed to send independent information to two GUs at different fixed locations. It aimed to characterize the capacity region over a given UAV flight duration, by jointly optimizing the UAV's trajectory and transmission power/rate allocations over time, subject to the UAV's maximum speed/power constraints. In [51], Colonnese et al. investigated the benefits of flexible resource allocation when performing adaptive video streaming across cellular systems. To guarantee the video smoothness in the presence of fluctuations of the channel capacity, the authors considered a proxy video manager and resource controller located at the cellular BS. A cross-layer bandwidth allocation scheme was proposed to minimize the transmission delay, considering the channel quality, video quality requirements and coding rate fluctuations.

In summary, the above studies mainly focus on the UAV-enabled digital video transmission. They cannot avoid inherent defects of the digital systems, such as the low bandwidth usage in P2P and the cliff effect in DVB. In order to provide an effective solution to UAV-enabled broadcast scenario, we will introduce the fundamental knowledge of PAVT scheme in the next part.

\subsection{PAVT Scheme}
Conventional DVB systems with JPEG 2000/H.264 divide a video into groups of pictures (GOPs) and adopt predictive coding. Then, intra- and inter-frame correlations of the video enable high compression efficiency. The DVB system can select a suitable channel modulation/coding scheme (MCS) to overcome the channel interference based on the channel conditions. However, due to the fluctuations of the channel quality, the selected MCS may not guarantee a predetermined PLR. The cliff effect will occur under deep channel fading. Especially in the broadcast scenario, the receivers have different channel qualities, and the transmitter should select a MCS according to the worst channel quality to guarantee the correct demodulation of all receivers.

The PAVT scheme can be used to improve the effectiveness of video broadcast, in which the transmitter does not need to know the receivers' channel quality since the video demodulation quality is proportional to its corresponding channel quality. Specifically, a video is first divided into multiple GOPs. Then, three-dimensional discrete cosine transform (3D-DCT) is performed for each GOP to remove the spatial and temporal redundancy. Next, the transformed DCT coefficients in each frame are divided into blocks with a uniform size and the transmitter allocates different transmission power levels for each coefficient block according to its variance. After that, Hadamard transform is performed for each block to reduce the peak-to-average power ratio (PAPR). Finally, the transmitter sends those transformed coefficients in high-density modulation mode. At each receiver, it performs a series of operations in order, including the signal demodulation, inverse Hadamard transform, minimum mean squared error estimation, inverse 3D-DCT transform, and finally reconstructs the entire video.

\subsection{UAV-GU Channel Model}
Consider the case with one UAV and $N$ GUs (denoted as ${u_n, n \in {\cal N} = \{ 1,...,N\}}$), where the coordinate of $u_n$ is known to the UAV and can be denoted as ${{{{\bf{w}}}}_n} = {\left[ {{x_n},{y_n},0} \right]^T},\;\forall n \in {\cal N}$. For ease of analysis, the flight duration of the UAV is divided into $K$ time slots, and the length of each time slot $k \in {\cal K} = \{ 1,...,{K}\}$ is ${\Delta}$. Note that ${\Delta}$ can be set small enough so that the distance between the UAV and $u_n$ can be regarded as a constant in each time slot. Assume that the starting and ending points of UAV's flight are predetermined, which can be denoted as ${{\bf{w}}_0}$ and ${{\bf{w}}_F}$, and the UAV flies at the altitude $H$. Consequently, the UAV trajectory can be approximately represented by the set ${\bf{Q}}= \{{\bf{q}}[k]={\left[ {x[k],y[k], H} \right]^T},\forall k \in {\cal K}\}$, where ${\bf{q}}[k]$ represents the coordinate of the UAV in time slot $k$. The distance between the UAV and $u_n$ in time slot $k$ can be denoted as
\begin{equation}\label{E1}
{d_k}(n) = \left\| {{\bf{q}}[k] - {{\bf{w}}_n}} \right\|.
\end{equation}

Then, the average channel power gain ${\beta _k}(n)$ can be modelled as
\begin{equation}\label{E2}
{\beta _k}(n) = {\beta _0}d_k^{ - \alpha }(n) = \frac{{{\beta _0}}}{{{{(||{\bf{q}}[k] - {{\bf{w}}_n}|{|^2})}^{\alpha /2}}}},
\end{equation}
where ${\beta_0}$ is the average channel power gain at a reference distance of $d_0 = 1$m, and $\alpha$ is the path loss exponent that usually has a value between 2 and 6. Thus, the instantaneous channel gain from the UAV to $u_n$ in time slot $k$ can be denoted as
\begin{equation}\label{E3}
{h_k}(n) = {g_k}(n)\sqrt {{\beta _k}(n)},
\end{equation}
where ${g_k}(n)$ indicates the shadowing and small-scale fading component between the UAV and $u_n$ in time slot $k$, and $E[|{g_k}(n){|^2}] = 1$. By considering different channel models such as probabilistic LoS model and Rician fading model [52], we may rewrite ${g_k}(n)$ as different functions. For simplicity, we assume that the communication links from the UAV to GUs are dominated by LoS channels, and the Doppler effect caused by the UAV's mobility can be perfectly compensated at the receiver. Therefore, to simplify the analysis, we assume that ${g_k}(n) = 1$ and $\alpha  = 2$ in this paper.

\section{System model}
The scenario of the considered QUPWV-Cast is shown in Fig. 1, where a fixed-wing UAV\footnote{The fixed-wing UAV consumes less propulsion energy than the rotor UAV, and has a longer communication range.} is dispatched as a mobile BS to transmit the stored video to $N$ GUs through OFDM system. The flight duration of the UAV is divided into $K$ time slots. For simplicity, we take the time of transmitting a DCT coefficient block through the OFDM system as the length of each time slot $k$. Assume that the stored video has been processed as in SoftCast [7]. The goal of the proposed system is to maximize the minimum video demodulation quality of GUs by jointly optimizing the transmission power allocation strategy and the UAV trajectory.

In the following part, we will first introduce two basic models including distortion estimation model and energy consumption model. Then, we will define the PSNR-based objective function according to the estimated distortion. Finally, we will state the optimization problem under the constraints of UAV's total energy and motion mechanism.
\begin{figure}[htbp!]
\centering
\includegraphics[width=0.4\textwidth]{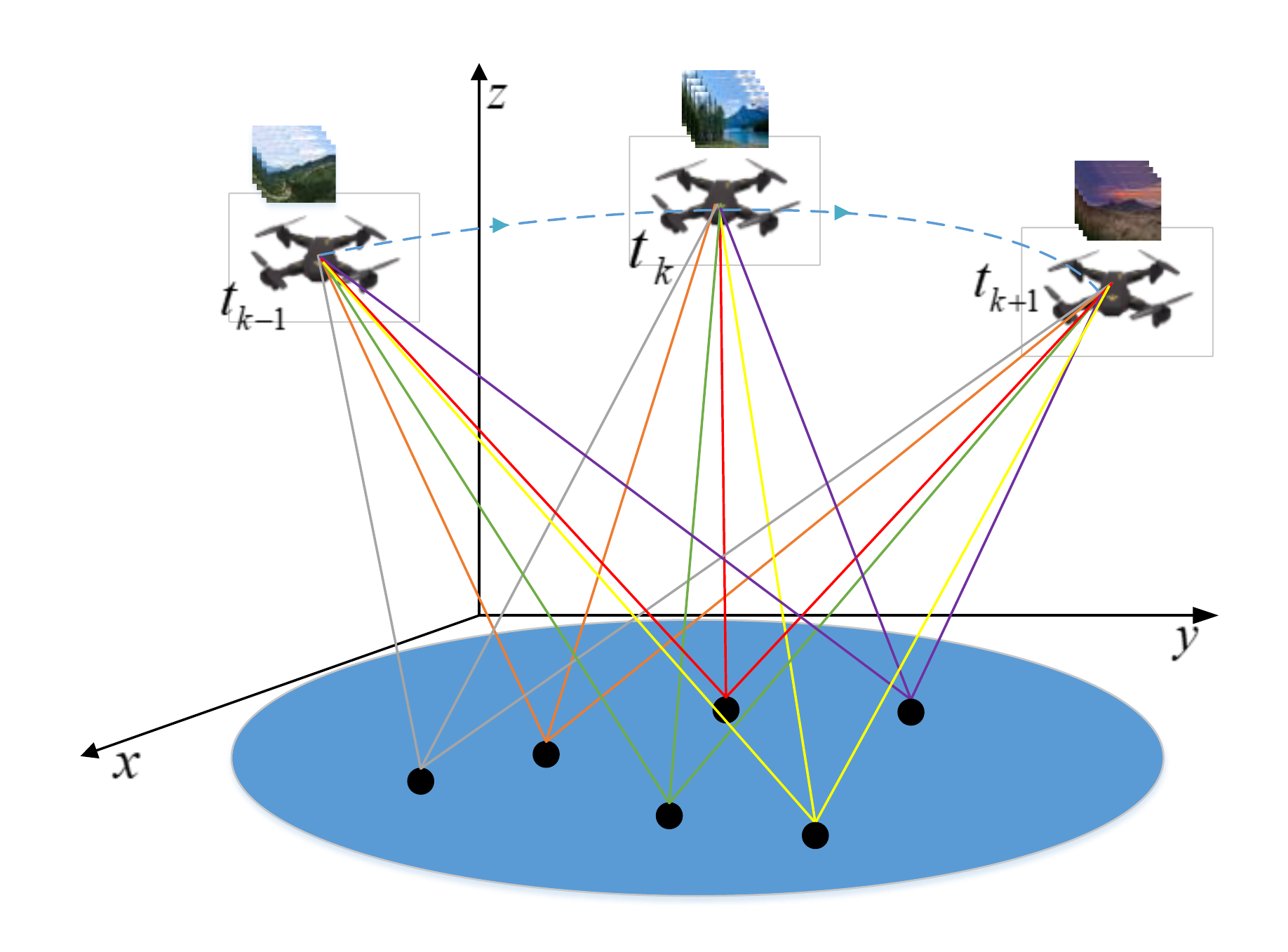}
\caption{The overview of the QUPWV-Cast system.}
\label{F1}
\end{figure}

\subsection{Distortion Estimation Model}
Denote the DCT coefficient blocks by the random variable $\left\{ {{X_m},m = 1, \cdot  \cdot  \cdot ,M} \right\}$, where $M$ represents the total number of coefficient blocks. Assume that ${X_m}\sim{\cal N}(0,{\lambda _m})$ where $\lambda _m$ represents the variance of $X_m$. In addition, we assume that ${X_m}$ is sorted in the descending order of variance, that is, ${\lambda _1} \ge \cdot\!\cdot\!\cdot\!\ge {\lambda _{{M}}}$. Let $x_k[j], 1 \le j \le N_p$ denote the $j^{th}$ DCT coefficient transmitted in time slot $k$, where $N_p$ represents the total number of coefficients contained in each block. Let $s_k$ denote the power scaling factor assigned for the coefficient block transmitted in time slot $k$. The video signal $x_k[j]$ after power scaling can be represented as $y_k[j] = s_kx_k[j]$, where $E[{\mathcal\bf{Y}_k^2}] = p_k$ represents the average transmission power allocated for coefficients in $X_k$. Obviously, we have ${p_k} = s_k^2{\lambda _k}$. Consequently, the video signal received by $u_n$ in time slot $k$, i.e., $\widetilde {y}_{n,k}[j]$, can be denoted as
\begin{equation}\label{E4}
\widetilde {{y}}_{n,k}[j] = {h_k}(n){y_{k}}[j] + {z_k}(n) = {h_k}(n){s_k}{x_{k}}[j]+ {z_k}(n),
\end{equation}
where ${h_k}(n)$ represents the instantaneous channel gain between the UAV and $u_n$ in time slot $k$, which has been given in Eq. (3). ${z_k}(n)$ is drawn \emph{i.i.d} from a zero-mean Gaussian distribution with variance $\sigma _0^2$, which can be denoted as ${z_k}(n)\sim{\cal N}\left( {0,\sigma _0^2} \right)$. As a consequence, the estimation of the demodulated video signal of $u_n$ can be denoted as
\begin{equation}\label{E5}
{\widehat{x}}_{n,k}[j] = \frac{{\widetilde {{y}}_{n,k}}[j]}{{{h_k}(n){s_k}}}  = {x_{k}}[j] + \frac{{{z_k}(n)}}{{{h_k}(n){s_k}}}.
\end{equation}

Eq. (5) indicates that the item $\frac{{{z_k}(n)}}{{{h_k}(n){s_k}}}$ determines the distortion of the transmitted video signals. In the PAVT scheme, we may discard some coefficient blocks with small variances\footnote{In natural videos, most DCT coefficients have a zero value because video frames tend to be smooth. The coefficient block with small variance usually contains little useful information.} to meet the bandwidth requirement without significantly affecting GUs' QoE [7]. The receivers can treat the discarded DCT coefficients as zeros when demodulating the video. As a consequence, the distortion caused by discarding coefficient blocks with small variances is only the sum of the squares of all discarded DCT coefficients. In this paper, we assume that only $K$ of $M$ coefficient blocks will be transmitted to GUs, and the remaining coefficient blocks will be discarded in order to meet the bandwidth requirement. For $u_n$, the expectation of the video reconstruction distortion can be denoted as follows
\begin{align}	
{D_n} &= E\left( {\sum\limits_{k = 1}^K {{\sum\limits_{j = 1}^{{N_p}} {{{( {\widehat {{x}}_{n,k}[j]\!-\! {x_{k}[j]}})}^2}} }} } \right) \!+\!\sum\limits_{m = K + 1}^M {{\sum\limits_{j = 1}^{{N_p}} {x_{m}^2[j]}} } , \notag \\
&= \sum\limits_{k = 1}^K {\frac{{{N_p}E(z_k^2(n))}}{{h_k^2(n)s_k^2}}}  + \sum\limits_{m = K + 1}^M {{\sum\limits_{j = 1}^{{N_p}} {x_{m}^2[j]}} }, \notag \\
&= {N_p}\left( {\sum\limits_{k = 1}^K {\frac{{{\sigma_0^2}{\lambda _k}}}{{h_k^2(n){p_k}}}}  + \sum\limits_{m = K + 1}^M {{\lambda _m}} } \right),\forall n.
\end{align}

\subsection{Energy Consumption Model}
The UAV's limited battery lifetime becomes the performance bottleneck of UAV-enabled video transmission systems. Therefore, it is critical to allocate the UAV's limited energy properly. In this section, we will discuss the UAV's energy consumption which is used to support the communications with GUs as well as the flight.

\subsubsection{Energy Consumed by UAV's Communications}
In the proposed QUPWV-Cast system, the UAV needs to properly allocate certain transmission power for each coefficient block, aiming at minimizing the GUs' video demodulation distortion. Then the GUs demodulate the video adaptively according to their own channel qualities. It should be noted that a small amount of side information (e.g., the mean and variance of each coefficient block) also needs to be transmitted to GUs besides the DCT coefficients [7]. The most reliable MCS is selected to ensure that each GU can decode the side information correctly. However, the energy consumed by transmitting these side information is often far less than that consumed by transmitting DCT coefficients. Specifically, the energy consumed by communications can be denoted as
\begin{equation}\label{E7}
{E_c} = {N_p}{\Delta}\sum\limits_{k = 1}^K {{p_k}},
\end{equation}
where ${p_k}$ represents the average transmission power allocated for each coefficient of the block transmitted in time slot $k$. In this paper, we use ${\bf{P}} \buildrel \Delta \over = \{ {p_k},\forall k \in {\cal K}\}$ to represent the transmission power allocation strategy. The average transmission power allocated for coefficients is not supposed to exceed the maximum average transmission power ${\overline P _{\max }}$. Therefore, the energy consumed by UAV communications is upper bounded by ${E_{\max }} = KN_p\Delta \overline P_{\max}$.

\subsubsection{Energy Consumed by UAV's flight}
Let ${\bf{v}}[k]\! \in \! {{\mathbb{R}}^{3 \times 1}}$ and ${\bf{a}}[k]\!\in\!{{\mathbb{R}}^{3 \times 1}}$ denote the UAV's velocity and acceleration in time slot $k$, respectively. We assume that ${v_{\min }}\!\le\!||{\bf{v}}[k]|| \le {v_{\max }},\forall k\!\in\!{\cal K}$ and $||{\bf{a}}[k]|| \!\le\!{a_{\max }},\forall k\!\in\!{\cal K}$, where ${v_{\max }}$ and ${a_{\max }}$ represent the UAV's maximum velocity and acceleration, respectively; and ${v_{\min }}$ represents the minimum velocity required by the fixed-wing UAV to maintain its flight. In addition, we define ${\bf{v}}[0] \buildrel \Delta \over = {{\bf{v}}_0}$ as the UAV's initial velocity and ${\bf{a}}[0] \buildrel \Delta \over =  {{\bf{a}}_0}$ as the UAV's initial acceleration.

According to the motion mechanism, we have ${\bf{v}}[k]\!-\!{\bf{v}}[k - 1]\!=\!{\bf{a}}[k - 1]{\Delta}$, and ${\bf{q}}[k]\!-\!{\bf{q}}[k-1]\!=\!{\bf{v}}[k - 1]{\Delta} + \frac{1}{2}{\bf{a}}[k-1]\Delta^2, \forall k\!\in\!{\cal K}$. Therefore, for a given UAV trajectory ${\bf{Q}} \buildrel \Delta \over = \{ {\bf{q}}[k],\forall k \in {\cal K}\}$, the UAV's velocity ${\bf{V}}  \buildrel \Delta \over = \{ {\bf{v}}[k],\forall k \in {\cal K}\}$ and acceleration ${\bf{A}} \buildrel \Delta \over = \{ {\bf{a}}[k],\forall k \in {\cal K}\} $ can be uniquely determined. Consequently, ${\bf{Q}} $, ${\bf{V}}$ and ${\bf{A}}$ are a set of coupling variables, and ${\bf{Q}} $ can be represented by ${\bf{V}}$ and ${\bf{A}}$. For each time slot $k$, the propulsion power of the fixed-wing UAV (i.e., $p_k^f$) measured in joules can be approximately represented as follows [44]
\begin{equation}\label{E8}
p_k^f = {c_1}||{\bf{v}}[k]|{|^3} + \frac{{{c_2}}}{{||{\bf{v}}[k]||}}\left( {1 + \frac{{||{\bf{a}}[k]|{|^2}}}{{{g_0^2}}}} \right),\forall k,
\end{equation}
where ${c_1}$ and ${c_2}$ are constants related to air density, drag coefficient, wing area, etc. $g_0$ is the gravitational acceleration, and its value is $9.8m/{s^2}$. Therefore, the total energy consumed by UAV's flight can be denoted as
\begin{equation}\label{E9}
{E_f} = {\Delta}\sum\limits_{k = 1}^K {p_k^f}.
\end{equation}

\subsection{Objective Function}
In this paper, PSNR is adopted as the metric to measure the GUs' QoE of the reconstructed video [53]. It is a standard objective image evaluation metric with a definition as follows
\begin{equation}\label{E10}
P\!S\!N\!R = 10{\log _{10}}\frac{{{\eta ^2}}}{{M\!S\!E}},
\end{equation}
where $\eta$ is a constant related to the pixel depth of the video\footnote{The pixel depth means the number of bits used to hold a pixel. For example, if the pixel depth is 8 bits, then $\eta  = 2^8-1 = 255$.}. $M\!S\!E$ represents the mean squared error of pixels between the reconstructed video and the original one. According to Eq. (6), the $M\!S\!{E_n}$ of ${u_n}$ can be denoted as
\begin{equation}\label{E11}
M\!S\!{E_n} = \frac{{ {\sum\limits_{k = 1}^K {\frac{{{\sigma_0^2}{\lambda _k}}}{{h_k^2(n){p_k}}}}\!+\!\sum\limits_{m = K + 1}^M\!{{\lambda _m}} }}}{M},\forall n.
\end{equation}

Combining Eq. (10) and Eq. (11), the $P\!S\!R\!N_n$ of the reconstructed video of $u_n$ can be formulated as
\begin{equation}\label{E12}
\begin{array}{l}
P\!S\!N\!{R_n}\!=\!10{\log _{10}}{\frac{{M{\eta ^2}}}{{\!{\sum\limits_{k = 1}^K {\frac{{{\sigma_0^2}{\lambda _k}{{\left\|\!{{\bf{q}}[k] - {{\bf{w}}_n}} \right\|}^2}}}{{{\beta _0}{p{_k}}}}}\!+\!\sum\limits_{m = K + 1}^M\!{{\lambda _m}} }\!}}},\forall n.
\end{array}
\end{equation}

From Eq. (12), one can see that the video demodulation quality is mainly affected by the following three factors:

1) {\emph{The UAV trajectory}}. The UAV trajectory actually determines the instantaneous channel gain ${h_k}(n)$. If the UAV is close enough to ${u_n}$, ${h_k}(n)$ will increase and the video demodulation quality will be improved accordingly. However, it is unlikely for the UAV to approach to all GUs at the same time. Thus it is necessary to design an optimal UAV trajectory to maximize the minimum PSNR of GUs.

2) {\emph{The transmission power allocation strategy}}. When more transmission power is allocated to the coefficient block, the signals decoded by GUs will be less error-prone. The transmission power allocation strategy is mainly related to the characteristics of the transmitted video data, e.g., ${\lambda _m}$. It is typical to allocate more transmission power to the coefficient blocks with large variances since those blocks may contain more useful information than those with small variances.

3) {\emph{The bandwidth budget}}. In the proposed system, some coefficient blocks need to be discarded due to the limited bandwidth resources. Eq. (12) shows that less coefficient blocks may worsen the reconstruction distortion.

\subsection{Problem Statement}
It has been pointed out in [7] that the reconstructed video with PSNR lower than 20 dB is unsuitable for high-resolution video applications. Therefore, the goal of this paper is to maximize the minimum PSNR of the video reconstructed by the GUs via jointly optimizing the transmission power allocation strategy (i.e., ${\bf{P}}$) and the UAV trajectory (i.e., ${\bf{Q}},{\bf{V}},{\bf{A}}$) under the constraints of UAV's total energy and motion mechanism. The optimization problem can be expressed as follows:
\begin{align}
(P1)\mathop {\max }\limits_{{\bf{P}},{\bf{Q}},{\bf{V}},{\bf{A}}} &\mathop {\min }\limits_n \;P\!S\!N\!{R_n},\\
{\rm{s.t.}}~~&{E_c} + {E_f} \le {E_t},\\
&0 \le {E_c} \le {E_{\max }},\\
&{v_{\min }} \le ||{\bf{v}}[k]|| \le {v_{\max }},\forall k,\\
&||{\bf{a}}[k]|| \le {a_{\max }},\forall k,\\
&{\bf{v}}[k] - {\bf{v}}[k - 1] = {\bf{a}}[k - 1]{\Delta},\forall k,\\
&{\bf{q}}[k]\!\!-\!\!{\bf{q}}[k\!-\!1]\!\!=\!\!{\bf{v}}\![k\!-\!1]\!{\Delta}\!\!+\!\!\frac{1}{2}{\bf{a}}[k\!-\!1]\!\Delta^2\!,\forall k,\\
&{\bf{q}}[0]\!\!=\!\!{{\bf{w}}_0},{\bf{q}}[K]\!\!=\!\!{{\bf{w}}_F},{\bf{v}}[0]\!\!=\!\!{{\bf{v}}_0},{\bf{a}}[0]\!\!=\!\!{{\bf{a}}_0},
\end{align}
 where ${E_t}$ represents the UAV's total available energy including the energy consumed by communications (i.e., ${E_c}$) and flight (i.e., ${E_f}$). The constraint (14) indicates that the consumed energy can not exceed the UAV's total available energy. The constraint (15) stipulates that the consumed transmission energy must be greater than $0$ but no more than the maximum allowable transmission energy ${E_{\max}}$. The constraint (16) specifies the UAV's minimum and maximum velocity. The constraint (17) specifies the UAV's maximum acceleration. The constraint (18) shows the relationship between the UAV's velocity and acceleration in each time slot. The constraint (19) shows the relationship between the UAV's coordinate and velocity/acceleration in each time slot. The constraint (20) gives the UAV's starting point, ending point, initial velocity and acceleration, respectively. Since the constraints (14) and (16) as well as the objective function in Eq. (13) are non-convex, {\emph{P}}1 becomes an intractable non-convex optimization problem.
\vspace{-4mm}
\section{Solution}
In this section, BCD and SCA techniques will be applied to obtain the sub-optimal solution to the original optimization problem {\emph{P}}1. Before solving {\emph{P}}1, we make the notation: $\mu  \buildrel \Delta \over = \mathop {\min }\limits_n P\!S\!N\!{R_n}$. After combining Eqs. (7)-(9) and (12), we can rewrite {\emph{P}}1 as follows:
\begin{align}
(P2)&\mathop {\max }\limits_{{\bf{P}},{\bf{Q}},{\bf{V}},{\bf{A}},\mu } \;\mu\\
&\rm{s.t.}~~(15)-(20), \nonumber \\
&10{\log_{10}}\!\frac{{M{\eta ^2}}}{{{\sum\limits_{k = 1}^K {\frac{{{\sigma_0 ^2}{\lambda _k}{{\left\| {{\bf{q}}[k] - {{\bf{w}}_n}} \right\|}^2}}}{{{\beta _0}{p_k}}}}\!+\!\sum\limits_{m = K + 1}^M \!{{\lambda _m}} }}} \!\ge\! \mu ,\forall n,\\
&{\Delta}\sum\limits_{k=1}^K\!{\left({{\!N_p}{p_k}\!+\!{c_1}\!||\!{\bf{v}}[k]\!|{|^3}\!+\! \frac{{{c_2}}}{{||\!{\bf{v}}[k]\!||}}\!\left({1\!+\!\frac{{||{\bf{a}}[k]|{|^2}}}{{{g_0^2}}}}\!\right)}\!\right)}\!\!\le\!\!{E_t}.
\end{align}

Since the maximization function is a convex function, the objective function of {\emph{P}}2 is now convex. We only need to consider how to transform the non-convex constraints (i.e., (16), (22), and (23)) to convex ones. Specifically, we decompose {\emph{P}}2 into two sub-optimization problems, that is 1) {\emph{transmission power allocation strategy optimization with fixed UAV trajectory}}, and 2) {\emph{UAV trajectory optimization with fixed transmission power allocation}}. Then we could use a globally iterative algorithm (discussion below) to obtain the sub-optimal solution to {\emph{P}}2.

\subsection{Transmission Power Allocation Strategy Optimization with Fixed UAV Trajectory}
Given the UAV trajectory, i.e., ${\bf{Q}}$, ${\bf{V}}$, and ${\bf{A}}$, we consider the following sub-optimization problem of {\emph{P}}2 to optimize the transmission power allocation strategy ${\bf{P}}$
\begin{align}
(P3)~~\mathop {\max }\limits_{{\bf{P}},\mu } &~\mu \\
{\rm{s.t.}}\;&(15),(22),(23). \notag
\end{align}

In the sub-optimization problem {\emph{P}}3, we introduce the following lemma to prove the convexity of the constraint (22).
\newtheorem{lemma}{\emph{\underline{Lemma}}}
\begin{lemma}
\label{lemma1}
Given the feasible UAV trajectory, i.e., ${\bf{Q}}$, ${\bf{V}}$, and ${\bf{A}}$, the constraint (22) is convex with respect to ${\bf{P}}$.
\end{lemma}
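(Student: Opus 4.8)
The plan is to show that $\mathrm{PSNR}_n$ is \emph{concave} in $\mathbf{P}$, since constraint (22) has the form $\mathrm{PSNR}_n \ge \mu$ and the hypograph of a concave function (here the region $\mathrm{PSNR}_n \ge \mu$ in the variables $(\mathbf{P},\mu)$) is convex. First I would fix the trajectory and absorb the constants, writing the denominator of (12) as $g_n(\mathbf{P}) = \sum_{k=1}^K \frac{a_{n,k}}{p_k} + b_n$, where $a_{n,k} = \frac{\sigma_0^2 \lambda_k \|\mathbf{q}[k]-\mathbf{w}_n\|^2}{\beta_0} > 0$ and $b_n = \sum_{m=K+1}^M \lambda_m \ge 0$ are fixed. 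Then $\mathrm{PSNR}_n = 10\log_{10}(M\eta^2) - 10\log_{10} g_n(\mathbf{P})$, so concavity of $\mathrm{PSNR}_n$ is exactly equivalent to convexity of $\log g_n(\mathbf{P})$, i.e. to $g_n$ being log-convex in $\mathbf{P}$.

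The tempting shortcut is to observe that each $a_{n,k}/p_k$ is convex in $p_k$, hence $g_n$ is convex; but this is \emph{not} enough, because the logarithm of a convex function need not be convex. Bridging this gap is the main obstacle, and the clean way around it is to establish log-convexity directly. Each summand $a_{n,k}/p_k$ is log-convex in $\mathbf{P}$, since $\log(a_{n,k}/p_k) = \log a_{n,k} - \log p_k$ and $-\log p_k$ is convex; the constant $b_n$ is trivially log-convex. I would then invoke the standard fact that a sum of log-convex functions is log-convex (which itself follows from H\"{o}lder's inequality applied to the defining multiplicative inequality) to conclude that $g_n$ is log-convex. Hence $\log g_n$ is convex, $\mathrm{PSNR}_n$ is concave in $\mathbf{P}$, and constraint (22) describes a convex set in $(\mathbf{P},\mu)$.

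If a self-contained argument avoiding the log-convexity lemma is preferred, I would instead verify positive semidefiniteness of the Hessian of $\log g_n$ directly. Writing $c_k = a_{n,k}/p_k^2 > 0$ and $c = (c_1,\dots,c_K)^T$, a short computation gives $\nabla^2 \log g_n = \frac{1}{g_n}\mathrm{diag}(2c_k/p_k) - \frac{1}{g_n^2}\, cc^T$, so for any $v$ one needs $g_n \sum_k \frac{2c_k}{p_k} v_k^2 \ge \left(\sum_k c_k v_k\right)^2$. Since $g_n \ge \sum_k c_k p_k$ (because $c_k p_k = a_{n,k}/p_k$ and $b_n \ge 0$), this follows from Cauchy--Schwarz via $\left(\sum_k c_k v_k\right)^2 \le \left(\sum_k c_k p_k\right)\left(\sum_k \frac{c_k}{p_k} v_k^2\right)$. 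Either route yields the lemma; I expect the log-convexity argument to give the cleaner presentation, with the only genuine subtlety being the recognition that mere convexity of $g_n$ does not imply convexity of $\log g_n$.
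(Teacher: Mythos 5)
Your proof is correct, and it takes a genuinely different route from the paper's. The paper first clears the logarithm by exponentiating both sides of (22), reducing the lemma to concavity (in $\mathbf{P}$) of the harmonic-mean-like function $\varphi_n(\mathbf{P}) = M\eta^2\big/\bigl(\sum_{k=1}^K \omega_n[k]/p_k + \gamma_1\bigr)$ against the convex right-hand side $10^{\mu/10}$, and then argues concavity by differentiating $\varphi_n$ twice. You instead keep the logarithm and show $P\!S\!N\!R_n$ itself is concave in $\mathbf{P}$, via log-convexity of the denominator $g_n(\mathbf{P}) = \sum_k a_{n,k}/p_k + b_n$ (each term $a_{n,k}/p_k$ is log-convex, and a sum of log-convex functions is log-convex), with a self-contained Hessian-plus-Cauchy--Schwarz argument as an alternative. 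Since the two rewritings of (22) describe the same feasible set, either suffices, but the comparison favors you on two counts. First, your target is the weakest statement the lemma needs: concavity of $\varphi_n = M\eta^2/g_n$ (the paper's target) implies convexity of $\log g_n$ (yours), but not conversely, so you prove less and still get the constraint's convexity. Second, and more importantly, your argument is complete where the paper's is not: the paper computes only the diagonal second-order partials $\partial^2\varphi_n/\partial p_k^2$ and concludes that $\varphi_n$ is concave ``with respect to $p_k$''; coordinate-wise concavity does not imply joint concavity in $\mathbf{P}$, and the cross-partials are never examined. Your log-convexity lemma handles the joint dependence automatically, and your explicit Hessian $\nabla^2 \log g_n = \frac{1}{g_n}\mathrm{diag}(2c_k/p_k) - \frac{1}{g_n^2}cc^T$ confronts the rank-one cross term head-on, disposing of it via Cauchy--Schwarz together with $g_n \ge \sum_k c_k p_k$. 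The subtlety you flag---that convexity of $g_n$ alone does not give convexity of $\log g_n$---is exactly the right obstacle to isolate, and your proof closes a gap that the paper's own appendix leaves open.
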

\begin{proof}
Please see \emph{Appendix A} for the proof details.
\end{proof}
The constraints (15) and (23) are also convex with respect to ${\bf{Q}}$. Consequently, the sub-optimization problem {\emph{P}}3 is a standard convex optimization problem which can be solved through some optimization toolboxes, e.g., CVX.
\vspace{-2mm}
\subsection{UAV Trajectory Optimization with Fixed Transmission Power Allocation Strategy}
Given the feasible transmission power allocation strategy ${\bf{P}}$, we consider the following sub-optimization problem of {\emph{P}}2 to optimize the UAV trajectory ${\bf{Q}}$, ${\bf{V}}$, and ${\bf{A}}$.
\begin{align}
(P4)~~\mathop {\max }\limits_{{\bf{Q}},{\bf{V}},{\bf{A}},\mu } &\;\mu \\
{\rm{s.t.}}\;\;&(16)\!-\!(20),(22),(23). \notag
\end{align}

The sub-optimization problem {\emph{P}}4 is a non-convex problem due to the non-convexity of the constraints (16), (22) and (23). Before solving the sub-optimization problem {\emph{P}}4, we first introduce a set of slack variables ${\bf{O}} = \{o[k],\forall k\}$. Note that $o[k]$ is a scalar. Then {\emph{P}}4 can be rewritten as the following optimization problem,
\begin{align}
(P5)~~&\mathop {\max }\limits_{{\bf{Q}},{\bf{V}},{\bf{A}},{\bf{O}},\mu }\;\mu\\
{\rm{s.t.}}\;\;&(17)-(20),(22),\notag \\
&||{\bf{v}}[k]|{|^2} \le v_{\max }^2,\forall k,\\
&v_{\min }^2 \le ||{\bf{v}}[k]|{|^2},\forall k,\\
&{o^2}[k] \le ||{\bf{v}}[k]|{|^2},\forall k,\\
&o[k] \ge 0,\forall k,\\
&{\Delta}\!\sum\limits_{k = 1}^K\!{\left(\!{{c_1}||{\bf{v}}[k]|{|^3}\!+\!\frac{{{c_2}}}{{o[k]}}\left( {1\!+\!\frac{{||{\bf{a}}[k]|{|^2}}}{{{g_0^2}}}}\right)} \right)} \le{E_t}\!-\!{E_c}.
\end{align}

Note that only the reciprocal term of $||{\bf{v}}[k]||$ is replaced by $o[k]$, and the cubic term of $||{\bf{v}}[k]||$ is reserved under the constraint (31). As a consequence, the left-hand side of the constraint (31) is monotonically decreasing with respect to $o[k]$.
\begin{lemma}
\label{lemma2}
Without compromising the optimality, the optimal solution to P5 must meet: $||{\bf{v}}[k]|| = o[k],\forall k$.
\end{lemma}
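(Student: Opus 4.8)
The plan is to prove the lemma by a standard slack-tightening (exchange) argument, showing that any optimal solution of $P5$ can be pushed onto the boundary $o[k]=||{\bf{v}}[k]||$ without changing the objective, so that the relaxation introduced in $P5$ is tight and $P5$ is equivalent to $P4$. The essential observation is structural: the slack variable $o[k]$ enters $P5$ only through the three constraints $(29)$, $(30)$ and $(31)$, and it is absent from the objective $\mu$ and from every other constraint. Hence perturbing $o[k]$ alone, with ${\bf{Q}},{\bf{V}},{\bf{A}}$ held fixed, can never alter the value of $\mu$ and can only affect feasibility through $(29)$--$(31)$.

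First I would record the monotonicity already noted before the statement: since $c_2>0$ and $1+||{\bf{a}}[k]||^2/g_0^2\ge 1>0$, the term $c_2/o[k]$---and therefore the entire left-hand side of $(31)$---is strictly decreasing in each $o[k]$. I would also note that $(28)$ forces $||{\bf{v}}[k]||\ge v_{\min}>0$, so the upper envelope $||{\bf{v}}[k]||$ available to $o[k]$ is strictly positive; this keeps $c_2/o[k]$ finite and guarantees that setting $o[k]=||{\bf{v}}[k]||$ respects $(30)$ automatically.

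Next I would argue by contradiction toward the without-loss-of-optimality conclusion. Suppose $({\bf{Q}}^\star,{\bf{V}}^\star,{\bf{A}}^\star,{\bf{O}}^\star,\mu^\star)$ is optimal for $P5$ yet $o^\star[k_0]<||{\bf{v}}^\star[k_0]||$ for some slot $k_0$. Define a new point identical to it except that $o[k_0]$ is raised to $\tilde o[k_0]=||{\bf{v}}^\star[k_0]||$. I would then verify feasibility term by term: constraint $(29)$ at $k_0$ now holds with equality, $\tilde o[k_0]=||{\bf{v}}^\star[k_0]||\ge v_{\min}>0$ satisfies $(30)$, and by the monotonicity above the left-hand side of $(31)$ can only decrease, so the energy budget $E_t-E_c$ is still met; the remaining constraints $(17)$--$(20)$, $(22)$, $(27)$ and $(28)$ do not involve $o[k_0]$ and are untouched. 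Since $\mu$ is unchanged, the new point is feasible with the same objective $\mu^\star$, hence also optimal but now tight at $k_0$. Applying this exchange to every slot where the inequality is strict yields an optimal solution satisfying $o[k]=||{\bf{v}}[k]||$ for all $k$, which is the claim; substituting it back recovers exactly the flight-energy term $c_2/||{\bf{v}}[k]||$ of $P4$, confirming that the relaxation is lossless.

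The main obstacle is conceptual rather than computational: because $o[k]$ never appears in the objective, one cannot force a strict increase of $\mu$ and therefore cannot prove that \emph{every} optimum is tight---a naive ``strictly better solution'' contradiction would fail. The correct and provable statement is the without-loss-of-optimality one asserted in the lemma, and its crux is the feasibility check on the coupled energy constraint $(31)$, where the monotone-decrease property is exactly what ensures that sliding $o[k]$ up to $||{\bf{v}}[k]||$ never breaks the budget; the positivity $v_{\min}>0$ from $(28)$ is the secondary ingredient that keeps $(30)$ and the reciprocal $c_2/o[k]$ well behaved.
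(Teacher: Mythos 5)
Your proof is correct and takes essentially the same route as the paper's Appendix B, which compresses the identical idea into two sentences: fix all other variables and raise $o[k]$ up to $||{\bf{v}}[k]||$, which preserves feasibility (via the monotone decrease of the left-hand side of (31) in $o[k]$) and leaves the objective $\mu$ unchanged. Your closing remark---that this is properly an exchange/without-loss-of-optimality argument rather than a strict-improvement contradiction, since $o[k]$ never enters the objective---is a fair sharpening of the paper's looser ``by contradiction'' phrasing.
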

\begin{proof}
Please see Appendix B.
\end{proof}

According to Lemma 2, we can conclude that the solution of {\emph{P}}5 is the same as {\emph{P}}4 due to the equivalence between the constraints (27)-(28) in {\emph{P}}5 and the constraint (16) in {\emph{P}}4. After introducing the slack variable set ${\bf{O}}$, the constraint (31) is now convex with respect to ${\bf{V}}$ and ${\bf{A}}$. However, {\emph{P}}5 is still non-convex due to the non-convexity of the constraints (22), (28) and (29).

To address this challenge, we propose to derive a sub-optimal solution to {\emph{P}}5 by applying the SCA technique. Specifically, we can derive the convex approximation of the right-hand side of the constraints (28)-(29), i.e., $||{\bf{v}}[k]|{|^2}$, and the left-hand side of the constraint (22), i.e., $10{\log_{10}}\!\frac{{M{\eta ^2}}}{{ {\sum\limits_{k = 1}^K {\frac{{{\sigma_0 ^2}{\lambda _k}{{\left\| {{\bf{q}}[k] - {{\bf{w}}_n}} \right\|}^2}}}{{{\beta _0}{p_k}}}}\!+\!\sum\limits_{m = K + 1}^M \!{{\lambda _m}} }}}$, at a given point in each iteration.

In the constraints (28)-(29), $||{\bf{v}}[k]|{|^2}$ is convex with respect to ${\bf{v}}[k]$. Since the first-order Taylor approximation of a convex function is a global under-estimator, $||{\bf{v}}[k]|{|^2}$ can be lower bounded as follows
\begin{equation}\label{E32}
||{\bf{v}}[k]|{|^2} \ge ||{{\bf{v}}^r}[k]|{|^2} + 2{\left( {{{\bf{v}}^r}[k]} \right)^T}\left( {{\bf{v}}[k] - {{\bf{v}}^r}[k]} \right),\forall k,
\end{equation}
where ${{\bf{v}}^r}[k]$ represents the given approximation point in the $r^{th}$ iteration. The equality holds at the point ${\bf{v}}[k] = {\bf{v}}^r[k]$. According to Eq.(32), the constraints (28)-(29) can be rewritten as
\begin{align}
v_{\min }^2 \le ||{{\bf{v}}^r}[k]|{|^2} + 2{\left( {{{\bf{v}}^r}[k]} \right)^T}\left( {{\bf{v}}[k] - {{\bf{v}}^r}[k]} \right),\forall k,\\
{o^2}[k] \le ||{{\bf{v}}^r}[k]|{|^2} + 2{\left( {{{\bf{v}}^r}[k]} \right)^T}\left( {{\bf{v}}[k] - {{\bf{v}}^r}[k]} \right),\forall k.
\end{align}

In Lemma 1, we have proved that the left-hand side of the constraint (22) is a concave function with respect to ${\bf{P}}$. Therefore, it can be inferred from the proof in \emph{Appendix A} that the left-hand side of the constraint (22) can not be a concave function with respect to ${\bf{Q}}$. For simplicity, we omit the proof process since it is similar to Lemma 1. In order to solve the non-convexity of the constraint (22), we still use the first-order Taylor approximation to obtain the lower bound of the left-hand side of the constraint (22) as follows
\begin{align}
&P\!S\!N\!{R_n}({\bf{Q}}) \ge P\!S\!N\!{R_n}({{\bf{Q}}^r}) \notag \\
&\!=\!I_n\!-\!\sum\limits_{k = 1}^K\!{{J_n}[k]}\!\left({||{\bf{q}}[k]\!-\!{{\bf{w}}_n}|{|^2}\!-\!||{{\bf{q}}^r}[k]\!-\!{{\bf{w}}_n}|{|^2}}\right)\!,\forall n,
\end{align}
where ${{\bf{Q}}^r} \buildrel \Delta \over = \{ {{\bf{q}}^r}[k],\forall k\}$ is defined as the given feasible UAV trajectory in the $r^{th}$ iteration. The equality holds at the point ${\bf{q}}[k] = {\bf{q}}^r[k]$. The coefficients ${I_n}$ and ${J_n}[k]$ can be denoted as
\begin{align}
&I_n = 10{\log _{10}}\!\frac{{M{\eta ^2}}}{{{\sum\limits_{k = 1}^K {\frac{{{\sigma_0 ^2}{\lambda _k}{{\left\| {{{\bf{q}}^r}[k]\!-\!{{\bf{w}}_n}} \right\|}^2}}}{{{\beta _0}{p_k}}}}  + \sum\limits_{m = K + 1}^M {{\lambda _m}} } }},\forall n,\\
&{J_n}[k]\!=\!\frac{{10 \cdot \frac{{{\sigma ^2}{\lambda _k}}}{{{\beta _0}{p_k}}}}}{{\ln\!10\!\left(\!{\sum\limits_{k = 1}^K \!{\frac{{{\sigma_0^2}\!{\lambda _k}\!{{\left\| {{{\bf{q}}^r}[k]\!-\!{{\bf{w}}_n}} \right\|}^2}}}{{{\beta _0}{p_k}}}}\!+\!\!\sum\limits_{m = K + 1}^M {{\lambda _m}} }\!\right)}},\forall k,n.
\end{align}

According to Eqs. (35)-(37), the constraint (22) in {\emph{P}}5 can be rewritten as follows
\begin{equation}\label{E38}
I_n\!-\!\sum\limits_{k = 1}^K\!{{J_n}[k]}\!\left( {||{\bf{q}}[k]\!-\!{{\bf{w}}_n}|{|^2}\!-\!||{{\bf{q}}^r}[k]\!-\!{{\bf{w}}_n}|{|^2}}\!\right) \ge \mu ,\forall n.
\end{equation}

Since the left-hand side of Eq. (38) is a concave function with respect to $||{\bf{q}}[k] - {{\bf{w}}_n}|{|^2}$, the constraint (38) is now a convex constraint. By combining Eqs. (33), (34) and (38), {\emph{P}}5 can be reformulated as follows
\begin{align}
(P6)~~\mathop {\max }\limits_{{\bf{Q}},{\bf{V}},{\bf{A}},{\bf{O}},\mu } &\;\mu \\
{\rm{s.t.}}~~~&(17)-(20),(27),(30),(31),(33),(34),(38).  \notag
\end{align}

In {\emph{P}}6, all constraints now satisfy the convexity requirements. Therefore, {\emph{P}}6 is a standard convex optimization problem, which can be efficiently solved by existing solvers, e.g., CVX. It should be noticed that, due to the lower bounds given in Eqs. (32) and (35), the constraints (22), (28) and (29) in {\emph{P}}5 will be satisfied as the constraints (33), (34) and (38) in {\emph{P}}6 are satisfied, but not \emph{vice versa}. Therefore, the feasible solution to {\emph{P}}6 is a subset of {\emph{P}}5.

\subsection{Complexity and Convergence Analysis}
Based on the results of the above two sub-optimization problems, the overall algorithm for computing the sub-optimal solution to {\emph{P}}2 is summarized in \emph{Algorithm 1}. The complexity of \emph{Algorithm 1} is analyzed as follows. In each iteration, the transmission power allocation strategy ${\bf{P}}$ and the UAV trajectory ${\bf{Q}}$ are iteratively optimized using the convex solver based on the interior-point method, and thus their individual complexity can be represented as $O((K)^{3.5}\rm{log}(1/\varsigma))$ and $O((3K)^{3.5}\rm{log}(1/\varsigma))$, respectively. Specifically, $K$ represents the total time slots and $\varsigma$ represents the predetermined solution accuracy. Then accounting for the BCD iterations with the complexity in the order of $\rm{log}(1/\varsigma)$, the total computation complexity of \emph{Algorithm 1} is $O((3K)^{3.5}\rm{log}^2(1/\varsigma))$ [52].
\begin{algorithm}[htb]
\caption{Iterative Optimization for {\bf{P}} and {\bf{Q}}.}
\label{alg:Framework}
\begin{algorithmic}[1]
\State Initialize ${{\bf{P}}^0}$, ${{\bf{Q}}^0}$, ${{\bf{V}}^0}$, and ${{\bf{A}}^0}$.
\State Set up a convergence threshold $\varsigma  > 0$.
\State Let $r = 0$.
\State {\bf{repeat}}
\State Solve {\emph{P}}3 with given UAV trajectory ${{\bf{Q}}^r}$, ${{\bf{V}}^r}$, and ${{\bf{A}}^r}$, and denote the optimal solution to the transmission power allocation strategy as ${{\bf{P}}^{r+1}}$.
\State Solve {\emph{P}}6 with given transmission power allocation strategy ${{\bf{P}}^{r+1}}$, and denote the optimal solution to the UAV trajectory as ${{\bf{Q}}^{r+1}}$, ${{\bf{V}}^{r+1}}$, and ${{\bf{A}}^{r+1}}$.
\State Update $r = r+1$.
\State {\bf{Break}}: if $\frac{{\mu}(\!{{\bf{P}}^{r\!+\!1}},{{\bf{Q}}^{r\!+\!1}},{{\bf{V}}^{r\!+\!1}},{{\bf{A}}^{r+1}})-{\mu}(\!{{\bf{P}}^{r}},{{\bf{Q}}^{r}},{{\bf{V}}^{r}},{{\bf{A}}^{r}})}{{\mu}(\!{{\bf{P}}^{r}},{{\bf{Q}}^{r}},{{\bf{V}}^{r}},{{\bf{A}}^{r}})}  \le \varsigma$.
\end{algorithmic}
\end{algorithm}

Next, we investigate the convergence property of \emph{Algorithm 1}. Let $\mu({{\bf{P}}^r},{{\bf{Q}}^r},{{\bf{V}}^r},{{\bf{A}}^r})$ denote the objective value of {\emph{P}}6 in the $r^{th}$ iteration. Therefore, the following inequality holds,
\begin{equation}\label{E40}
\begin{array}{l}
\mu ({{\bf{P}}^r},{{\bf{Q}}^r},{{\bf{V}}^r},{{\bf{A}}^r})\mathop  \le \limits^{(a)} \mu ({{\bf{P}}^{r + 1}},{{\bf{Q}}^r},{{\bf{V}}^r},{{\bf{A}}^r})\\
\!\mathop \le\limits^{(b)}\!{\mu }(\!{{\bf{P}}^{r\!+\!1}}\!,\!{{\bf{Q}}^{r\!+\!1}}\!,\!{{\bf{V}}^{r\!+\!1}}\!,\!{{\bf{A}}^{r\!+\! 1}}\!)\!\mathop  \le \limits^{(c)}\!\mu^*\!({{\bf{P}}^{r\!+\!1}}\!,\!{{\bf{Q}}^{r\!+\!1}}\!,\!{{\bf{V}}^{r\!+\!1}}\!,\!{{\bf{A}}^{r\!+\!1}}\!)\!.
\end{array}
\end{equation}
where ${\mu^*}(\!{{\bf{P}}^{r\!+\!1}},{{\bf{Q}}^{r\!+\!1}},{{\bf{V}}^{r\!+\!1}},{{\bf{A}}^{r\!+\! 1}})$ represents the optimal solution to {\emph{P}}2. The inequality (a) holds since Step 5 in \emph{Algorithm 1} can obtain the optimal solution to {\emph{P}}3. The inequality (b) holds as Step 6 in \emph{Algorithm 1} can obtain the optimal solution to {\emph{P}}6. Since the SCA technique is used to achieve the lower bounds of the constraints (32) and (35), the optimal solution to {\emph{P}}6 must be the lower bound of the optimal solution to {\emph{P}}2. Consequently, the inequality (c) holds. Therefore, the optimal solutions to {\emph{P}}3 and {\emph{P}}6 are guaranteed to be non-decreasing according to Eq. (40) over the iterations. Thus \emph{Algorithm 1} can converge to a locally optimal solution to {\emph{P}}2.

\section{Performance Analysis}
We have conducted simulations to verify the effectiveness of the proposed system. Particularly we investigate the influence of three factors, e.g., $K$, $E_t$, and $N$, on the system performance. For each case, we present the simulation results about 1) the transmission power allocation strategy, 2) the UAV trajectory, 3) the convergence of the proposed algorithm, and 4) the PSNRs of GUs, respectively. Finally, we compare the performance of the proposed system with three other systems, e.g., DVB, SoftCast, and SharpCast, from the perspectives of subjective visual quality and objective evaluation metric.

\subsection{Parameter Settings}
We assume that the UAV flies at a fixed altitude as $H = 100m$. The average noise power is $\sigma_0^2=-109 dBm$ and the maximum average transmission power for each coefficient is set to $10 dBm$. $\beta_0$ is assumed to be $-40 dB$. According to the energy-consumption model of the fixed-wing UAV [43], $c_1$ and $c_2$ are set to $9.26 \times {10^{ - 4}}$ and $2.25\times {10^{3}}$, respectively. To ensure the safety, the UAV's velocity is restricted to be within the range of $3 m/s$ to $100 m/s$ and the UAV's maximum acceleration is set to $a_{\max}=10m/s^2$.

In the simulations, we require that the UAV flies from the starting point (i.e., ${{\bf{w}}_0} = {[0,300,100]^Tm}$) to the ending point (i.e., ${{\bf{w}}_F} = {[300,0,100]^Tm}$). The initial UAV trajectory is designed to fly straight from the starting point to the ending point at a constant speed. Therefore, the initial UAV trajectory can be denoted by the set ${\bf{Q}}^0= \{{\bf{q}}[k]={{{\bf{w}}_0}+\frac{k}{K}({\bf{w}}_F-{\bf{w}}_0)},\forall k\!\in\!{\cal K}\}$. Consequently, the initial feasible velocity set for the UAV is ${\bf{V}}^0=\{{\bf{v}}[k]= \frac{{\bf{w}}_F-{\bf{w}}_0}{K\Delta},\forall k \in {\cal K}\}$. The initial acceleration set for the UAV is ${\bf{A}}^0=\{{\bf{a}}[k]= [0,0,0]^T,\forall k \in {\cal K}\}$. For simplicity, the UAV's initial velocity ${\bf{v}}_0$ is also set to $\frac{{\bf{w}}_F-{\bf{w}}_0}{K\Delta}$, and the UAV's initial acceleration ${\bf{a}}_0$ is set to $[0,0,0]^Tm/s^2$, respectively. In addition, we assume that each transmitted coefficient is allocated with the maximum average transmission power at the beginning of the iteration, i.e., $p_k = 10dBm, \forall k \in\!{\cal\!K}$. Thus the initial transmission power allocation strategy can be denoted as ${\bf{P}}^0=\{p_k = 10dBm,\forall k \in {\cal K}\}$. For ease of reference, all of the parameters are provided in \textbf{Table I}.
\begin{table*}[htbp!]
\centering
\caption{The parameter settings.}
\begin{tabular*}{17.1cm}{cc|cc}
\hline
Parameter &Value &Parameter &Value\\
\hline
UAV's altitude &$H=100 m$ &Number of transmitted blocks &$K=120/140/160/180$  \\
\hline
{UAV's starting point} &${{\bf{w}}_0} = {[0,300,100]^Tm}$ &Number of coefficients in each block &$N_p=396$  \\
\hline
{UAV's ending point}  &${{\bf{w}}_F} = {[300,0,100]^Tm}$  &Maximum average transmission power &${\overline P _{\max }}=10 dBm$    \\
\hline
UAV's maximum velocity  &$v_{\max} = 100 m/s$ &Length of each time slot &$\Delta= 0.1s$ \\
\hline
UAV's minimum velocity  &$v_{\min}= 3 m/s$  &UAV's total energy  &$E_t= 3000/4000/5000/6000J$  \\
\hline
UAV's maximum acceleration &$a_{\max} = 10 m/s^2$ &Index of transmitted frames &$Index= 1^{st}, 150^{th}, 300^{th} $ \\
\hline
UAV's initial velocity  &${{\bf{v}}_0} = \frac{{\bf{w}}_F-{\bf{w}}_0}{K\Delta} m/s$ &Number of GUs &$N=4/6/8/10$  \\
\hline
UAV's initial acceleration  &${\bf{a}}_{0} = [0,0,0]^T m/s$  &Constant 1 for UAV's energy-consuming model  &${c_1} = 9.26 \times {10^{ - 4}}$   \\
\hline
Noise power &$\sigma _0^2 =  - 109dBm$ &Constant 2 for UAV's energy-consuming model &$c_2=2250$    \\
\hline
Referenced channel gain &${\beta _0} =  - 40dB$ &Gravity acceleration &$g_0 = 9.8m/{s^2}$    \\
\hline
Path loss exponent &$\alpha  = 2$ &Constant related to PSNR formulation &$\eta  = 255$   \\
\hline
Total block number &$M = 192$ &Convergence threshold &$\varsigma  = {10^{ - 4}}$   \\
\hline
\end{tabular*}
\label{T1}
\end{table*}

Six classic testing video sequences including ``Foreman'', ``Akiyo'', ``Coastguard'', ``Container'', ``Hall'', and ``Mother-daughter'' are used to verify the effectiveness of the proposed system. These video sequences are open source and have been widely used for simulation analysis in multimedia research [53]. The details of these six video sequences are provided in \textbf{Table II}, including resolution, frame number, frame rate, and size. All of them have a 8-level pixel depth (i.e., the pixel value ranges from 0 to 255). Similar to SoftCast, we divide each frame into blocks with uniform size of $22 \times 18$. Therefore, each frame contains 64 coefficient blocks and each coefficient block contains 396 DCT coefficients. For simplicity, we select the $1^{st}$, $150^{th}$, and $300^{th}$ frames from each video sequence as the signal source.
\begin{table}[htbp!]
\centering
\caption{Details of testing video sequences.}
\begin{tabular*}{8.9cm}{ccccc}
\hline
Sequences  &Resolution &Frame number & Frame rate& Size \\
\hline
Foreman &$176 \times 144$ &300& 25fps& 5.88MB \\
\hline
Akiyo &$176 \times 144$ &300& 25fps& 1.86MB \\
\hline
Coastguard &$176 \times 144$ &300& 25fps& 5.74MB \\
\hline
Container &$176 \times 144$ &300& 25fps& 4.06MB \\
\hline
Hall &$176 \times 144$ &300& 25fps& 5.72MB \\
\hline
Mother-daughter &$176 \times 144$ &300& 25fps& 10.8MB \\
\hline
\end{tabular*}
\label{T2}
\end{table}

\subsection{Influence of $K$ on the performance of the proposed scheme}
In DVB, GUs can only request the BS for the videos with a bit rate matching their channel bandwidth. It has been shown that the video quality can be modelled as a logarithmic function with respect to the bit rate [49]. Therefore, GUs with low channel quality may suffer from low QoE when requesting videos from the BS. The PAVT scheme can perform efficiently even when the channel bandwidth is limited, since it is allowed to discard some coefficient blocks with small variances, in order to meet the bandwidth budget while causing little impact on the GUs' QoE.

Here we study the impact of available channel bandwidth on the performance of the proposed scheme. We only change the value of $K$ while maintaining other parameters constant. Specifically, we set the UAV's total energy $E_t$ to $3000J$. Assume that there are four GUs which are randomly generated by Monte-Carlo method. Specifically, both the x-coordinates and y-coordinates of these GUs range from 0m to 1200m. The performance of the proposed system is investigated under four cases: $K = \{120,140,160,180\}$. The simulation results are shown in Fig. 2.
\begin{figure*}
\begin{minipage}{0.49\linewidth}
  \centerline{\includegraphics[width=8cm]{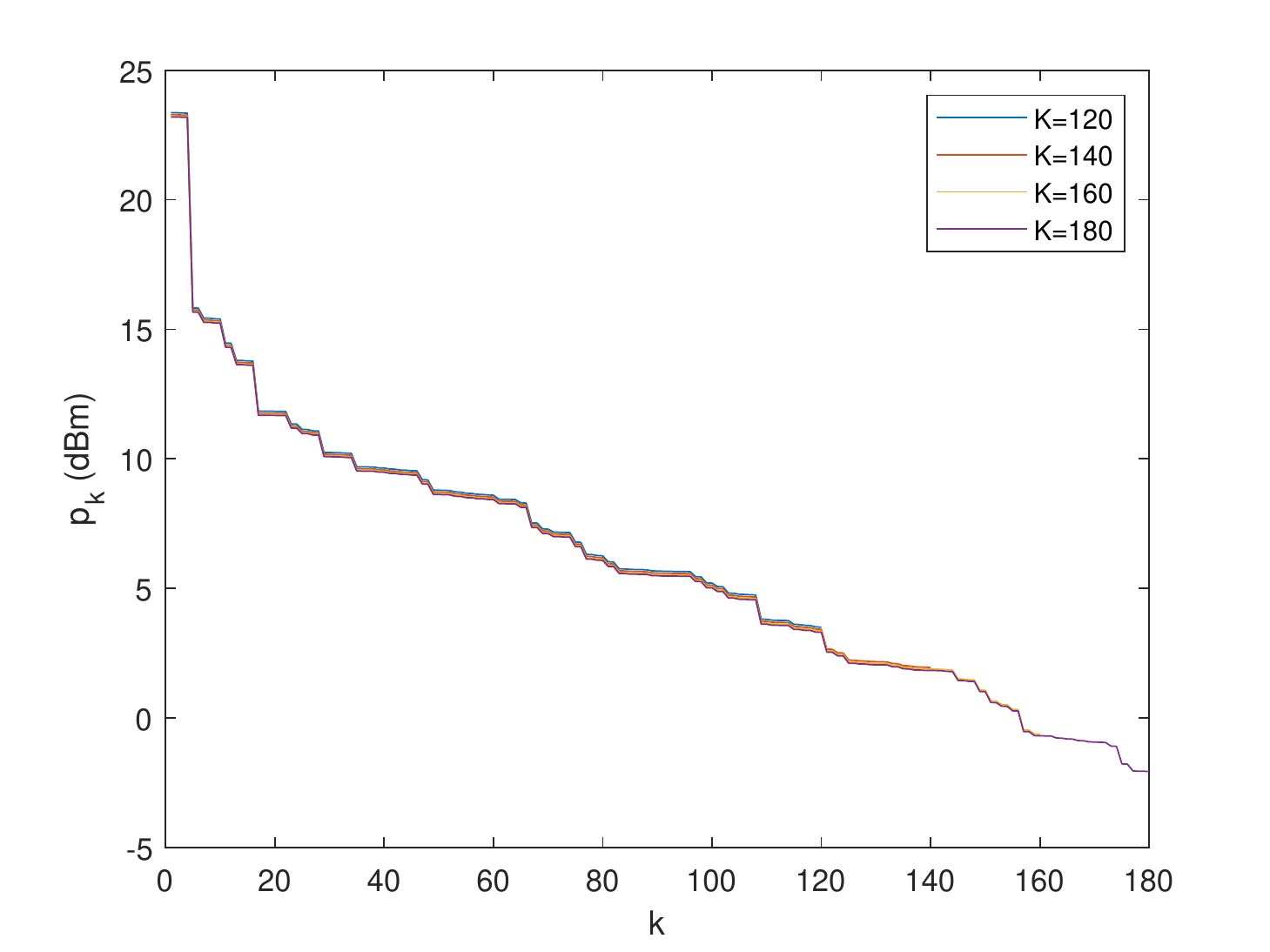}}
  \centerline{(a)}
\end{minipage}
\hfill
\begin{minipage}{0.49\linewidth}
  \centerline{\includegraphics[width=8cm]{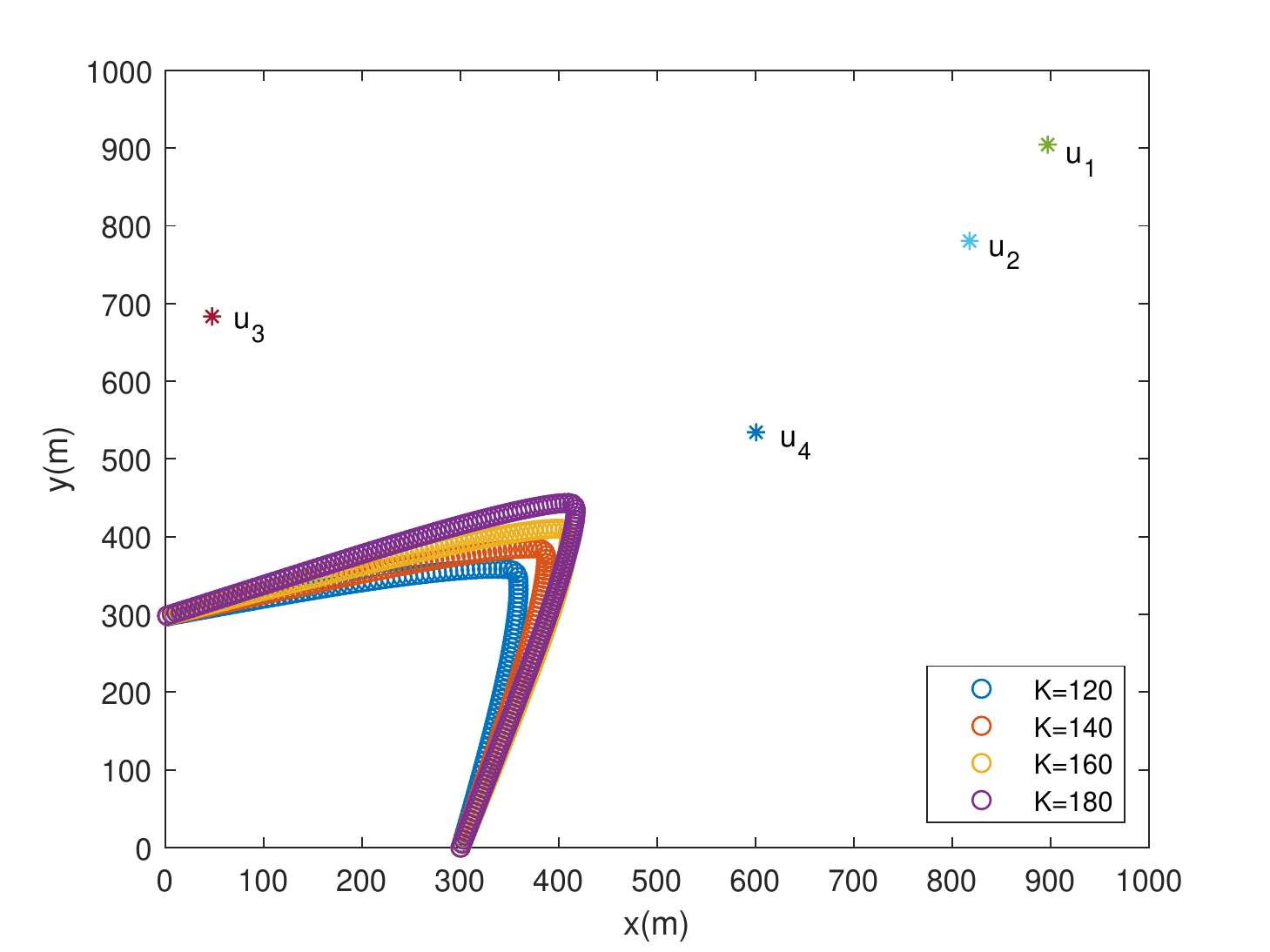}}
  \centerline{(b)}
\end{minipage}
\vfill
\begin{minipage}{0.49\linewidth}
  \centerline{\includegraphics[width=8cm]{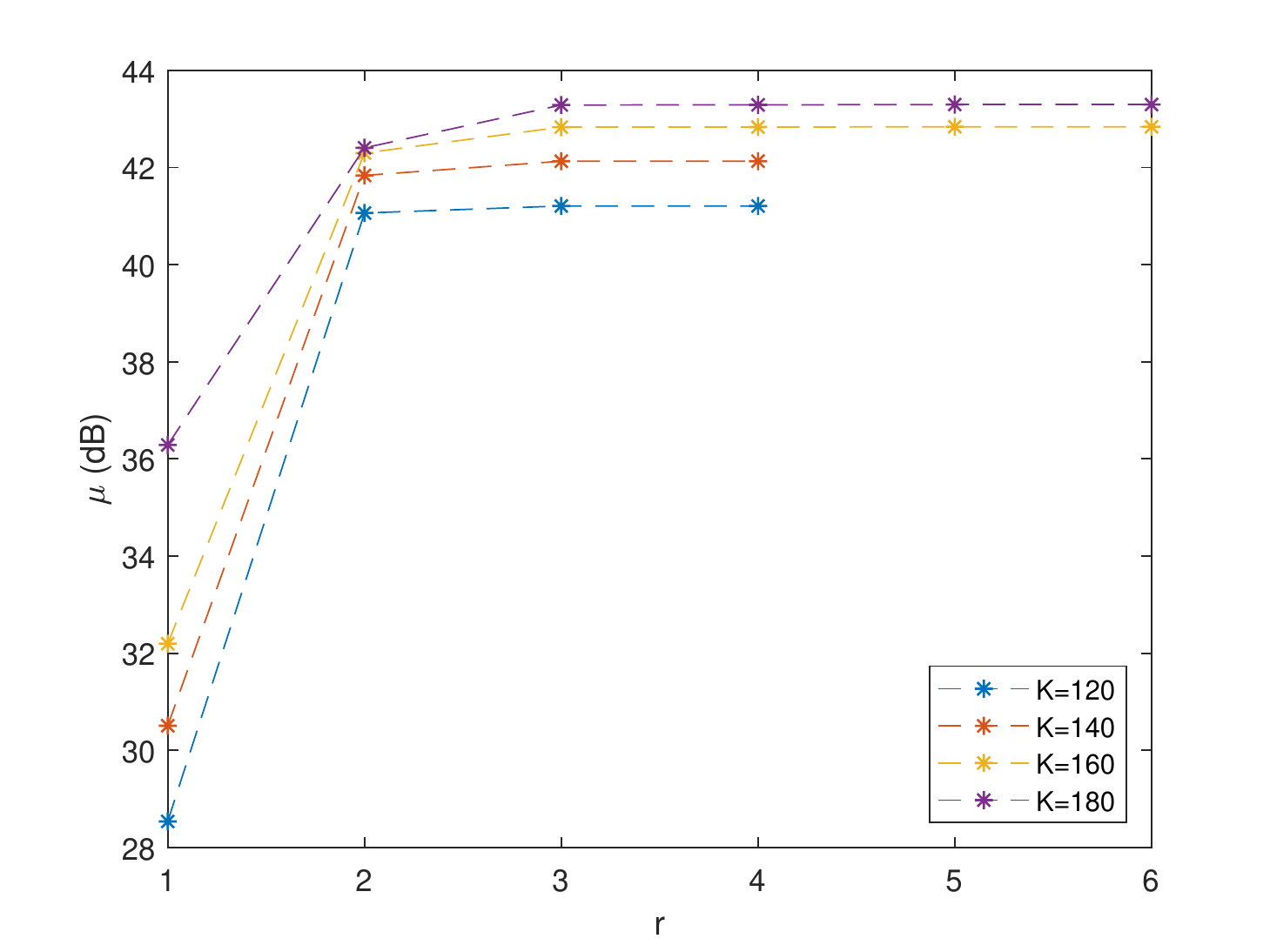}}
  \centerline{(c)}
\end{minipage}
\hfill
\begin{minipage}{0.49\linewidth}
  \centerline{\includegraphics[width=8cm]{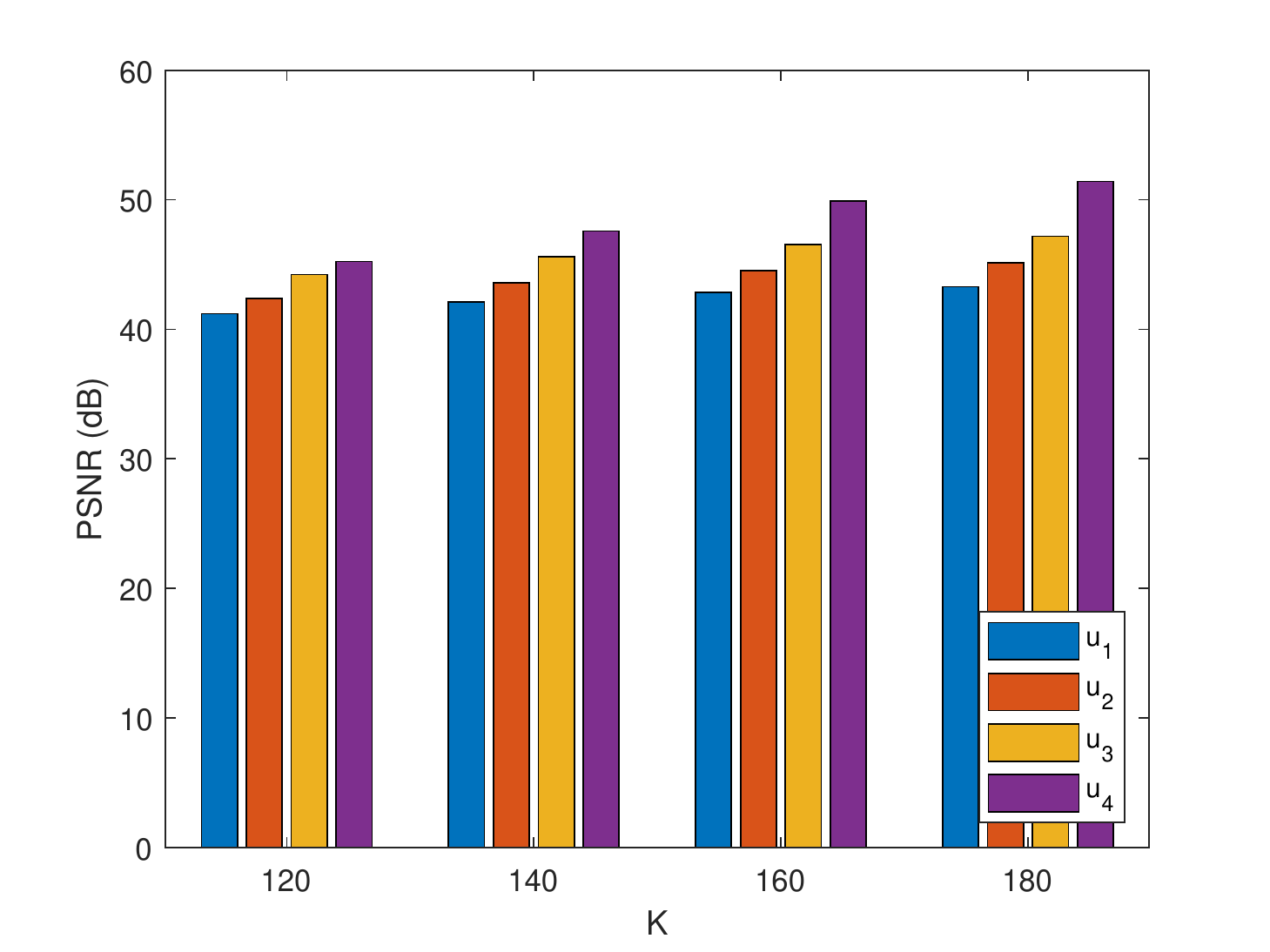}}
  \centerline{(d)}
\end{minipage}
\caption{Performance comparisons under conditions of different channel bandwidth: (a) transmission power allocation strategy, (b) UAV trajectory, (c) convergence of the proposed algorithm, and (d) PSNRs of GUs.}
\label{fig:res}
\end{figure*}

Fig. 2(a) shows the transmission power allocation strategies under different values of $K$. One can see that the average transmission power allocated for the coefficients in each block decreases with the decline of the variance (here the coefficient blocks are sorted in descending order of variance). It has been proved in SoftCast scheme that the optimal transmission power for the coefficients in each block is approximately proportional to the standard deviation. Therefore, the proposed algorithm can obtain a sub-optimal power allocation strategy as in SoftCast. However, the location relationship between the BS and GUs are not considered in SoftCast when allocating the transmission power. In this paper, the UAV can adjust the transmission power allocation according to the channel quality of GUs in each time slot. Therefore, our proposed transmission power allocation scheme is more practical and effective. Fig. 2(b) shows the UAV trajectory under four cases. We can see that the UAV approaches the GU in the furthest location (i.e., $u_1$) to maximize the video demodulation quality of the GU. Since the UAV's flight duration is limited by $K$, it can be seen from Fig. 2(b) that the UAV's flight distance increases with the increment of $K$.

Fig. 2(c) shows the convergence of the proposed algorithm. We can see that the proposed algorithm has a fast convergence speed. We have shown in Section IV that the complexity of the proposed algorithm is $O((K)^{3.5}\rm{log}(1/\varsigma))$ which scales with $K$. From Fig. 2(c), one can see that with the increase of $K$, the algorithm needs more iterations to achieve the predetermined convergence accuracy. In addition, as $K$ increases, the GUs' minimum video demodulation quality keeps improving. However, the PSNR gain with the increase of $K$ becomes limited, which indicates that the proposed system can achieve satisfactory performance even under the condition of limited bandwidth. Fig. 2(d) shows the video demodulation qualities of four GUs under different values of $K$. It can be seen that the PSNR of four GUs under each case can achieve small improvement with the increase of $K$. Specifically, the PSNR of $u_1$ (which suffers from the worst demodulation quality due to the largest distance from the UAV trajectory) can also get improved with the increase of $K$.

\subsection{Influence of $E_t$ on the performance of the proposed scheme}
It has been analysed in Section III that the video demodulation qualities of GUs are mainly determined by the following three aspects: 1) \emph{UAV transmission power allocation strategy, i.e., ${\bf{P}}$}, 2) \emph{UAV trajectory}, i.e., ${\bf{Q}}$, and 3) \emph{Channel bandwidth}, i.e., $K$. The influence of $K$ on the performance of the proposed system has been analysed. Since both ${\bf{P}}$ and ${\bf{Q}}$ are constrained by $E_t$, we will study the influence of $E_t$ on the performance of the proposed system in this section. Again, we assume that there are four GUs with coordinates the same as before. $K$ is set to 180 here. The performance of the proposed system is studied under four cases of ${E_t} = \{3000, 4000, 5000, 6000\}J$. The simulation results are given in Fig. 3.
\begin{figure*}
\begin{minipage}{0.5\linewidth}
  \centerline{\includegraphics[width=8cm]{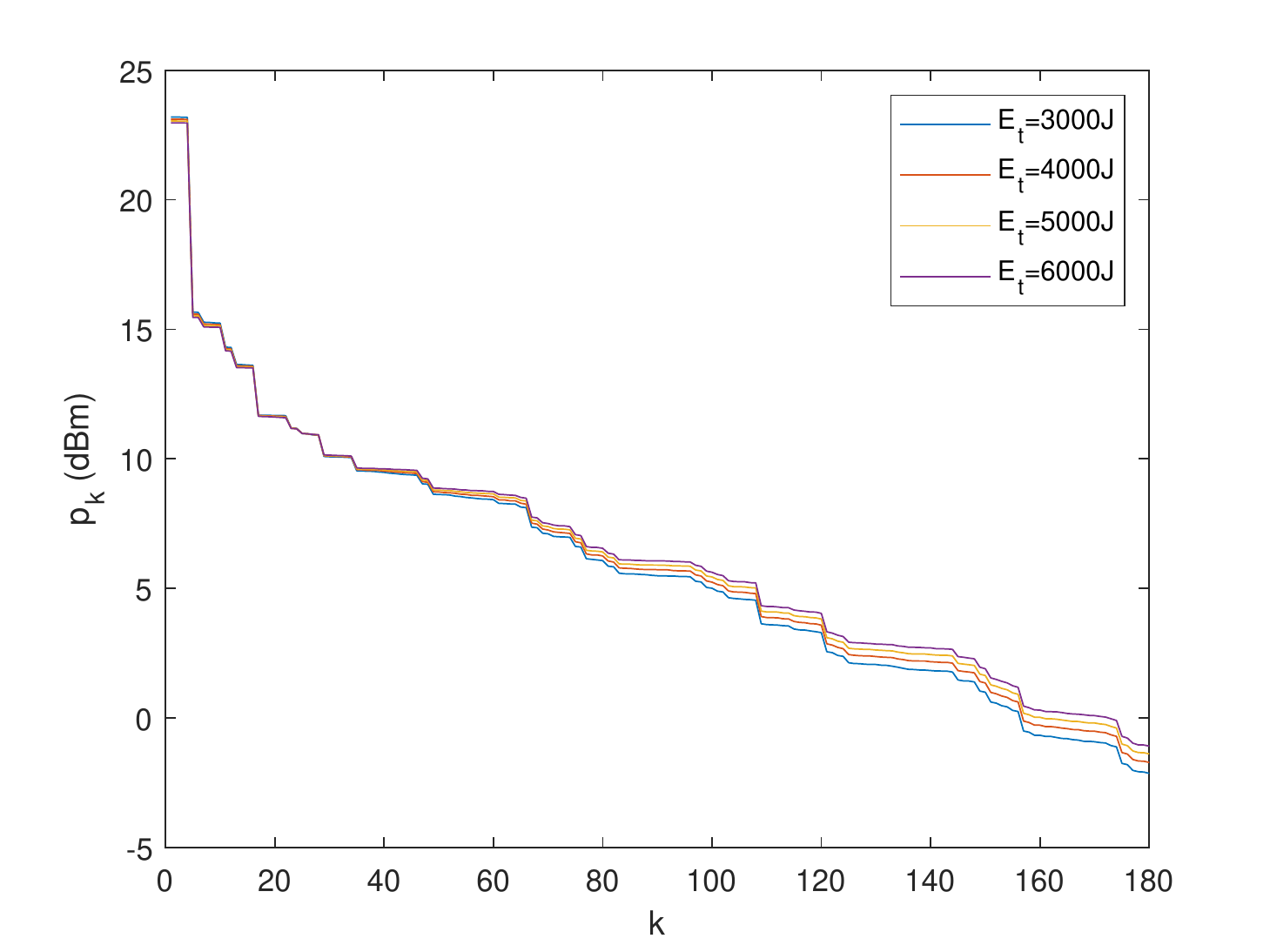}}
  \centerline{(a)}
\end{minipage}
\hfill
\begin{minipage}{0.5\linewidth}
  \centerline{\includegraphics[width=8cm]{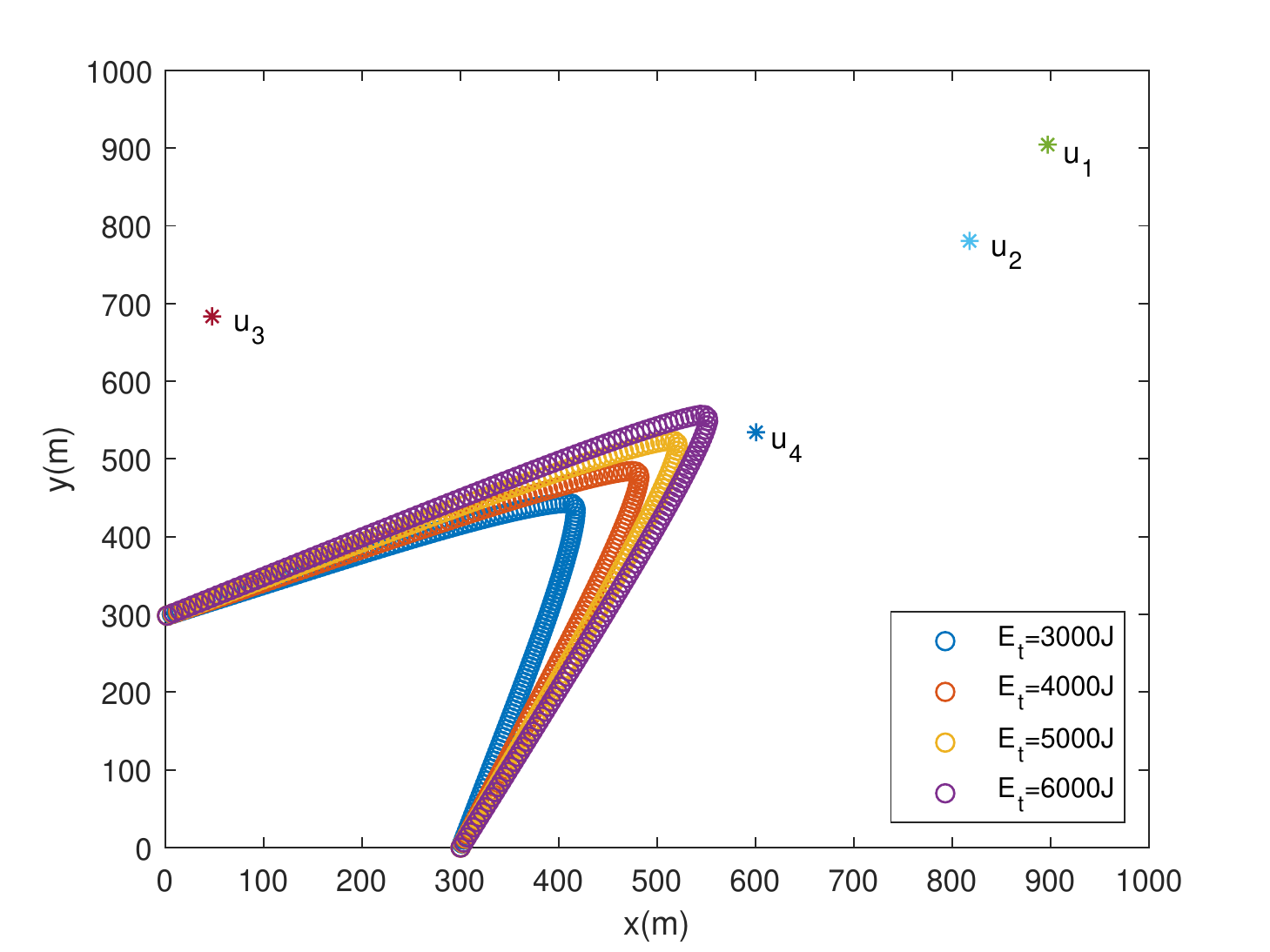}}
  \centerline{(b)}
\end{minipage}
\vfill
\begin{minipage}{0.5\linewidth}
  \centerline{\includegraphics[width=8cm]{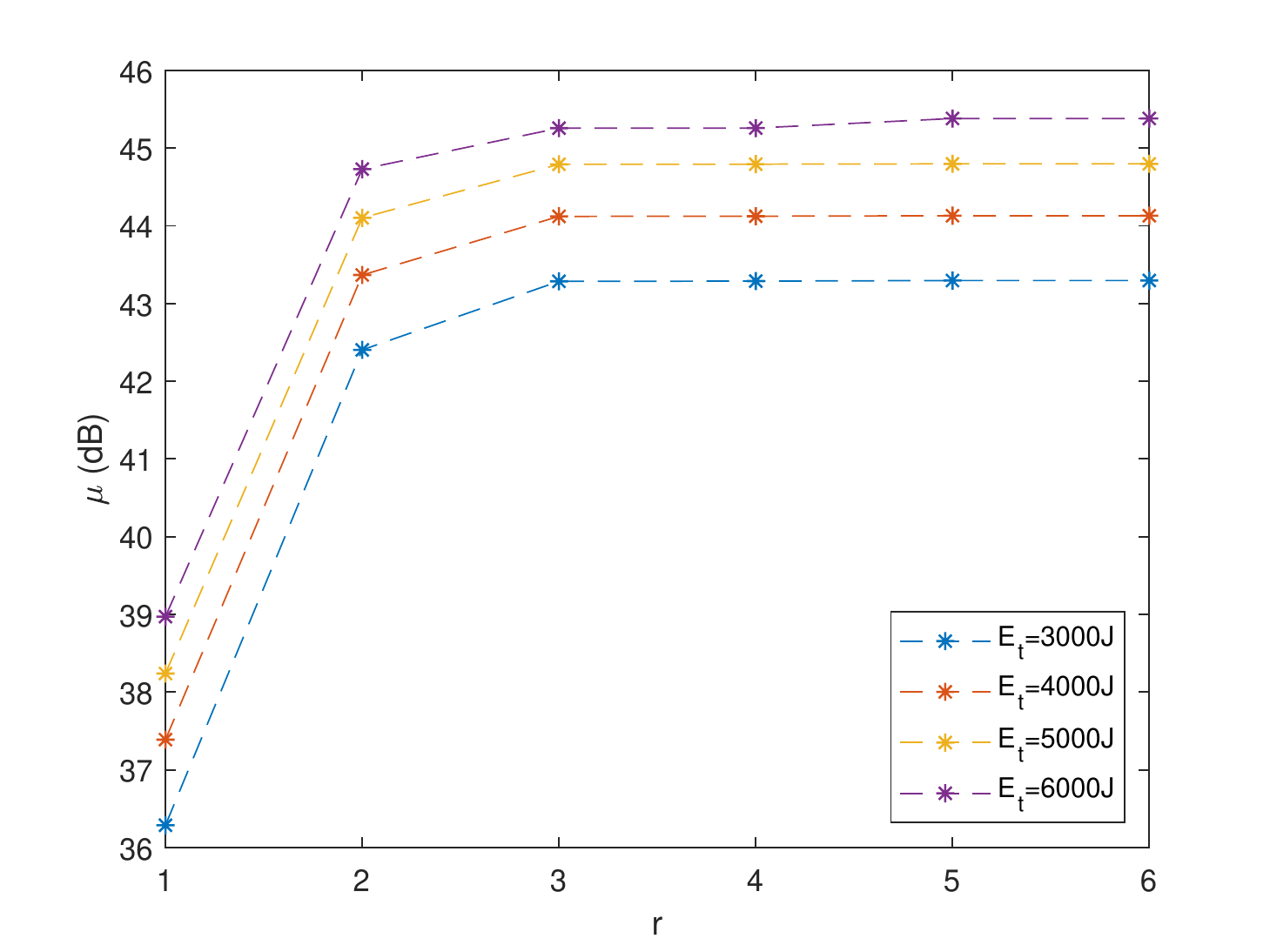}}
  \centerline{(c)}
\end{minipage}
\hfill
\begin{minipage}{0.5\linewidth}
  \centerline{\includegraphics[width=8cm]{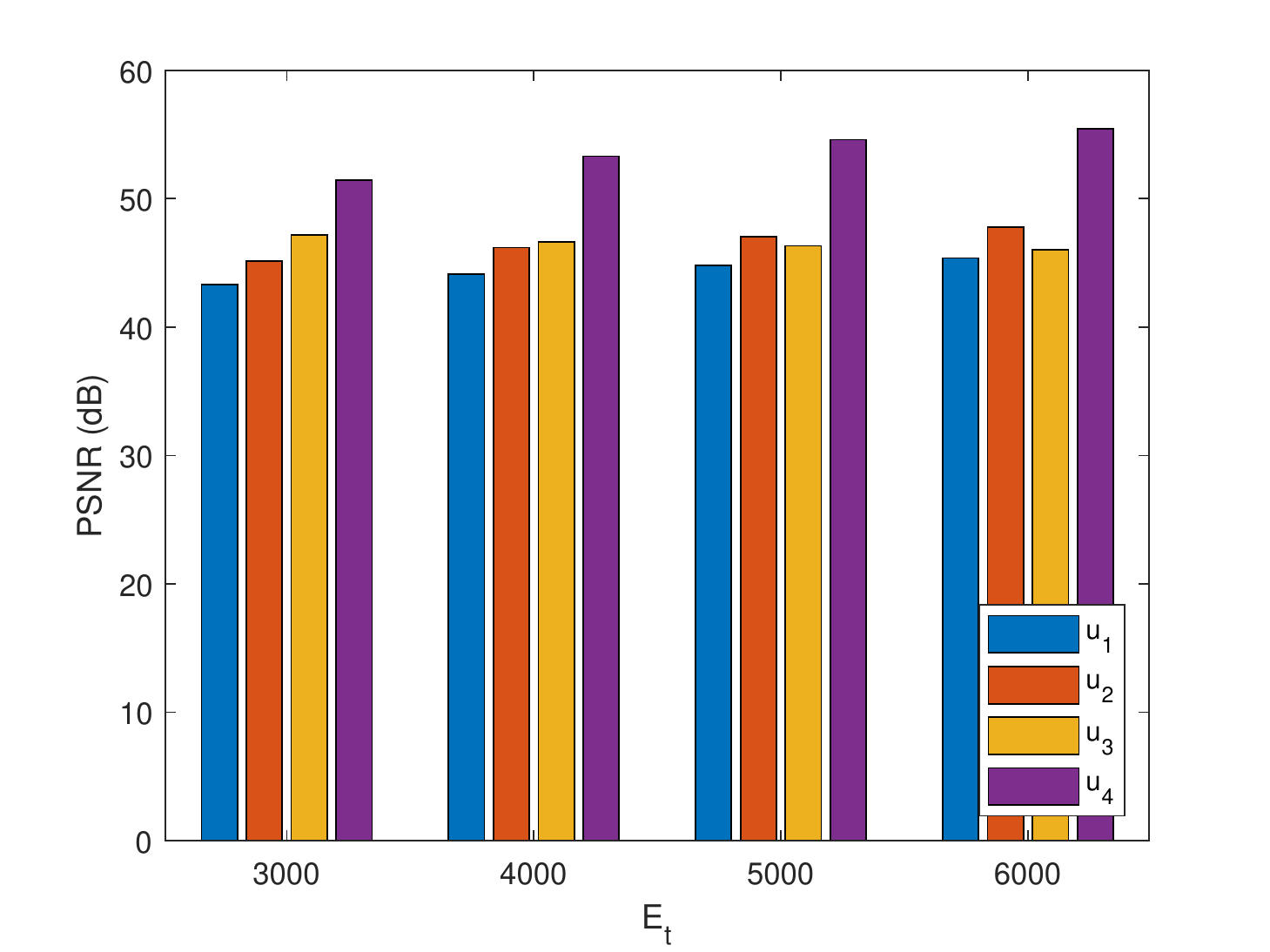}}
  \centerline{(d)}
\end{minipage}
\caption{Performance comparisons under conditions of different total energy: (a) transmission power allocation strategy, (b) UAV trajectory, (c) convergence of the proposed algorithm, (d) PSNRs of GUs.}
\label{fig:res}
\end{figure*}

Fig. 3(a) shows the transmission power allocation strategies under different cases of $E_t$. We can see that the transmission power allocated for those coefficient blocks with large variances is almost the same in all cases. As the total energy increases, the coefficient blocks with small variances can obtain more transmission power. Fig. 3(b) shows the UAV trajectory with different values of $E_t$. It shows that as the UAV's total energy increases, the UAV can fly closer to the farthest GU to maximize the video demodulation quality of the GU. In addition, one can also find that with the increase of $E_t$, the UAV's turning radius is getting smaller and smaller, which means consuming more propulsion energy\footnote{According to [44], the energy consumed by the UAV's flight is approximately proportional to the square of the UAV's acceleration.}.

Fig. 3(c) shows the convergence of the proposed algorithm under different values of $E_t$. With the increase of UAV's total energy, the sub-optimal solution obtained by the proposed algorithm is also increasing, and the increasing amplitude is approximately proportional to the energy increment. Fig. 3(d) shows the video demodulation qualities of the four GUs. From Fig. 3(d), one can conclude that with the increase of $E_t$, the minimum video demodulation quality of GUs (i.e., $u_1$) can be improved to some extent. Meanwhile, the video demodulation quality of $u_3$ gets lower since the distance is getting larger from the UAV.

\subsection{Influence of $N$ on the performance of the proposed scheme}
Besides the total energy $E_t$, the UAV trajectory is also affected by the number of GUs. The goal of the proposed system is to maximize the minimum PSNR of GUs. Therefore, we should consider the distribution of GUs when designing the UAV trajectory. We set $K$ to $180$ and $E_t$ to $3000J$, respectively. Assume that there are $10$ GUs which are randomly generated by Monte-Carlo method. Specifically, both the x-coordinate and y-coordinate of these GUs range from 0m to 1200m. We study the performance of the proposed system under the following four cases: $N = \{4, 6, 8, 10\}$. The corresponding simulation results are shown in Fig. 4.
\begin{figure*}
\begin{minipage}{0.49\linewidth}
  \centerline{\includegraphics[width=8cm]{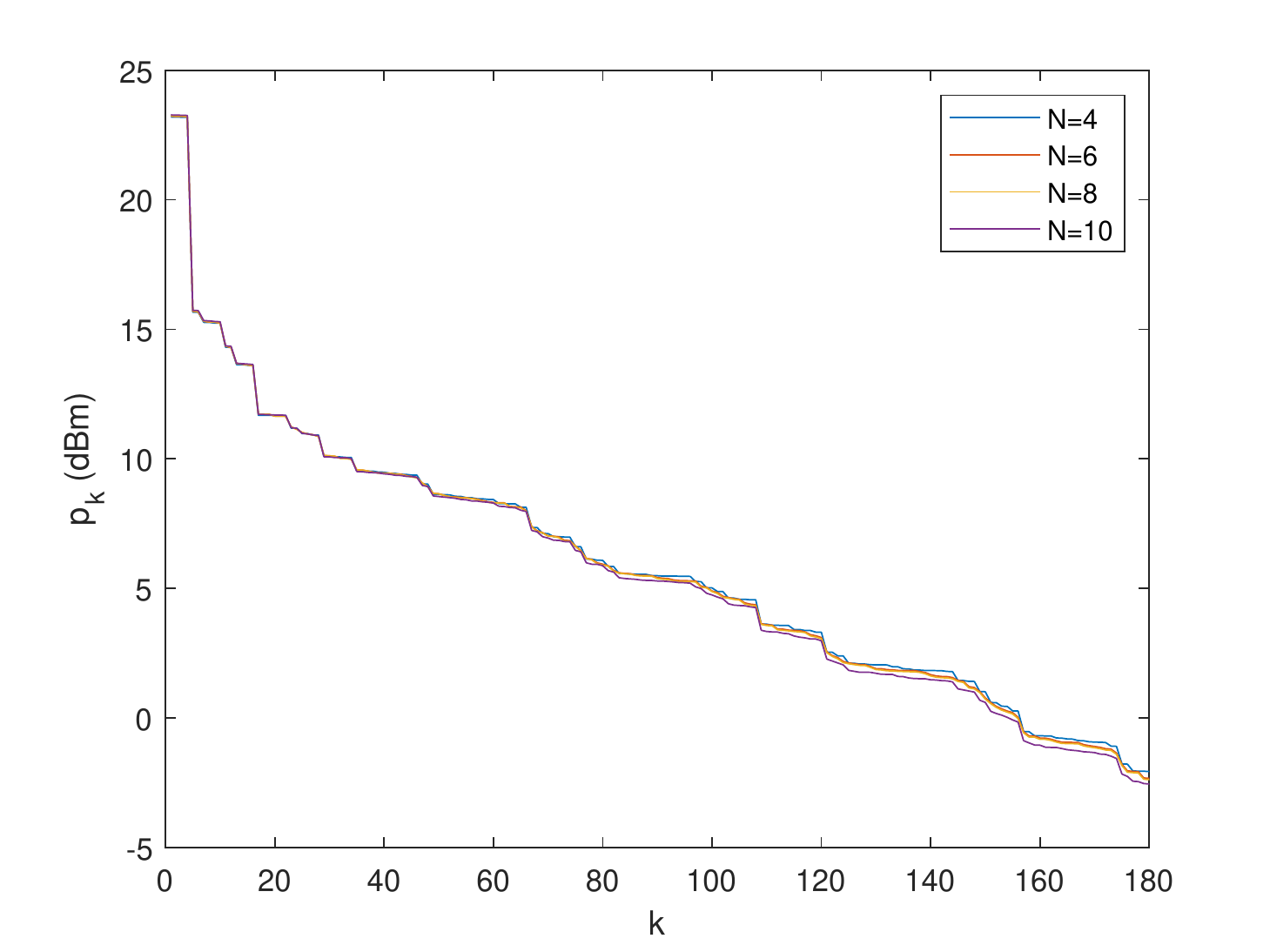}}
  \centerline{(a)}
\end{minipage}
\hfill
\begin{minipage}{0.49\linewidth}
  \centerline{\includegraphics[width=8cm]{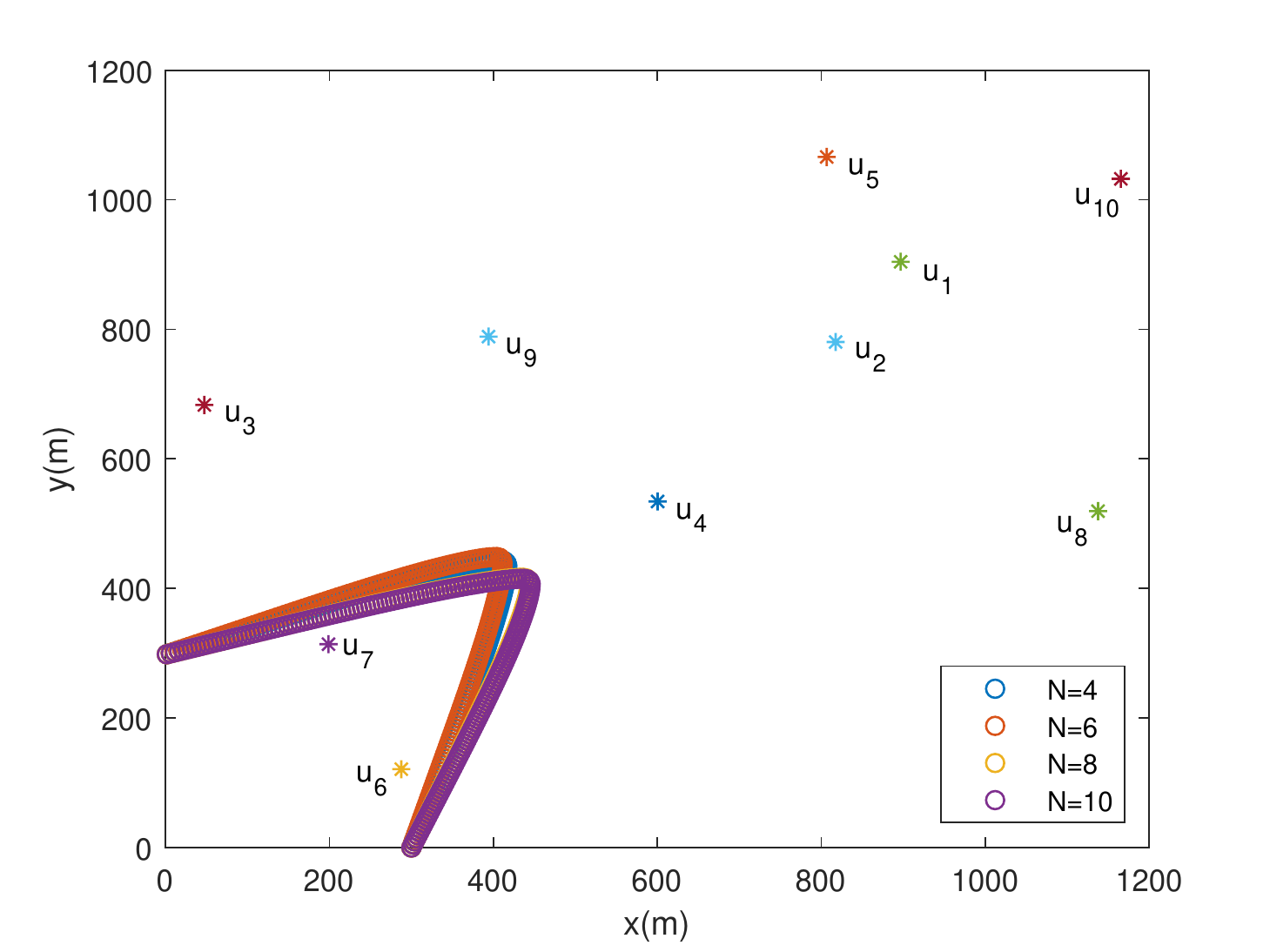}}
  \centerline{(b)}
\end{minipage}
\vfill
\begin{minipage}{0.49\linewidth}
  \centerline{\includegraphics[width=8cm]{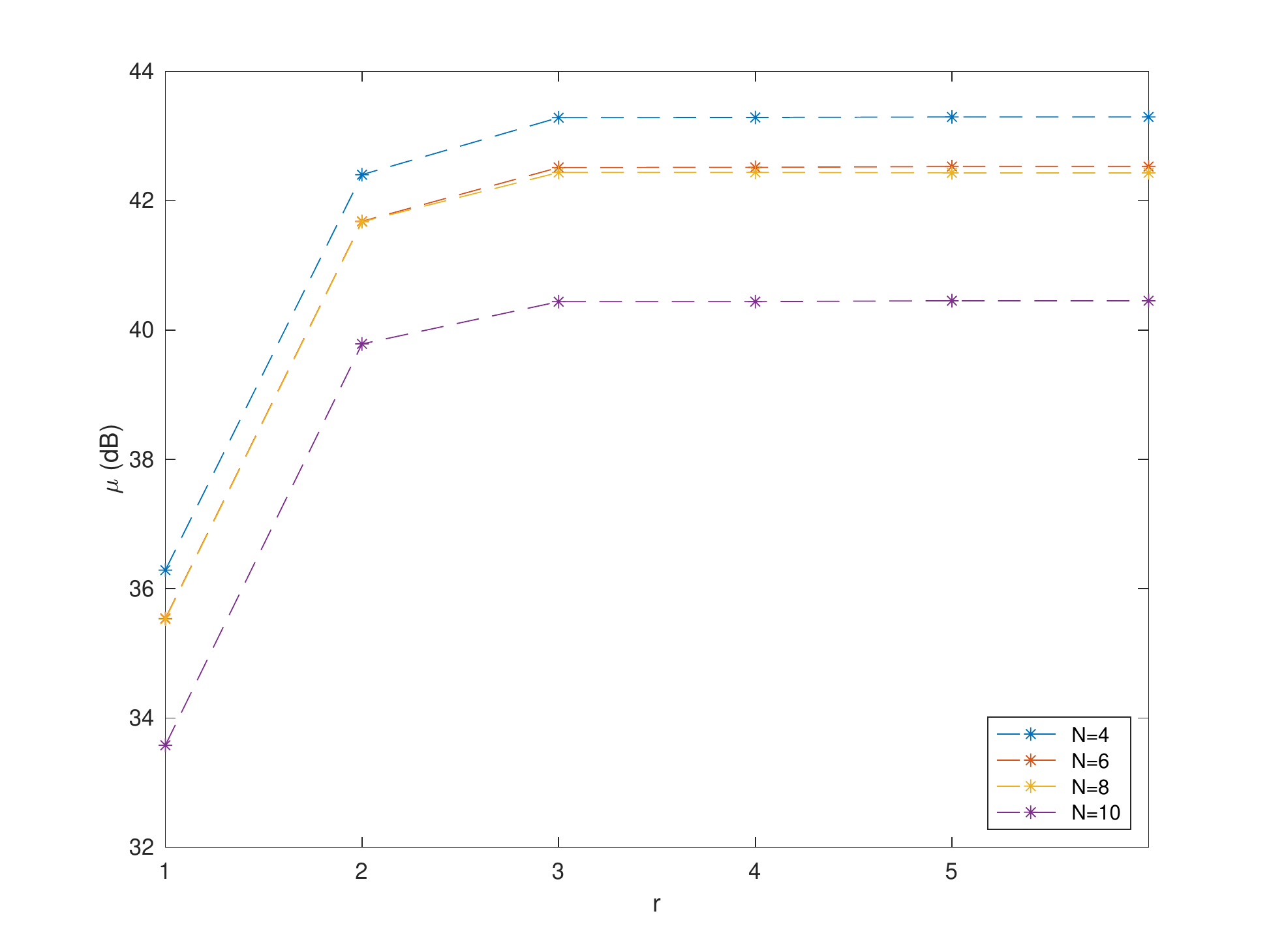}}
  \centerline{(c)}
\end{minipage}
\hfill
\begin{minipage}{0.49\linewidth}
  \centerline{\includegraphics[width=8cm]{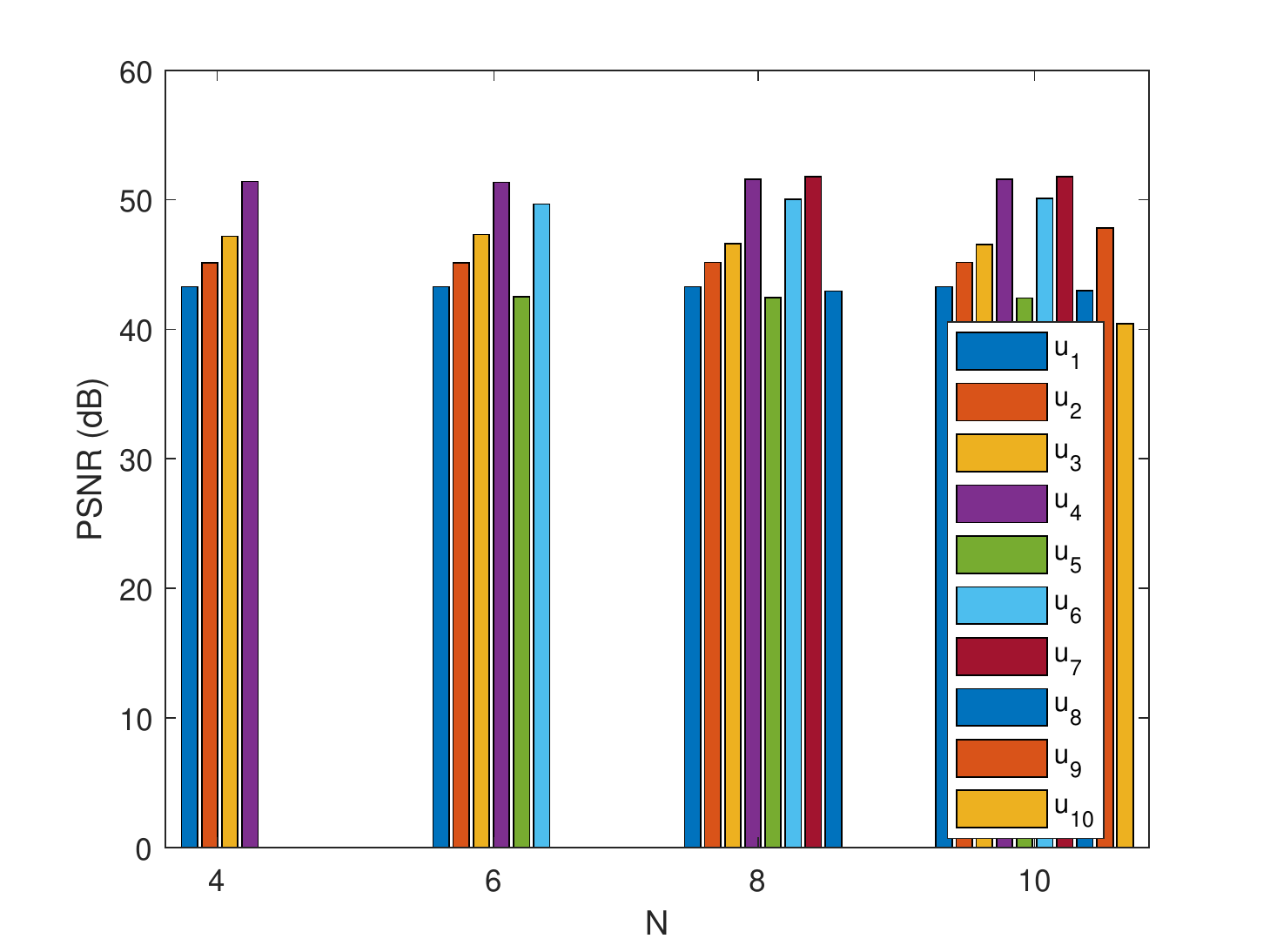}}
  \centerline{(d)}
\end{minipage}
\caption{Performance comparisons under conditions of different numbers of GUs: (a) transmission power allocation strategy, (b) UAV trajectory, (c) convergence of the proposed algorithm, (d) PSNRs of GUs.}
\label{fig:res}
\end{figure*}

Fig. 4(a) shows the transmission power allocation strategies under different values of $N$. We can see that the increase of $N$ does not have a big impact on the transmission power allocation strategy. From Fig. 4(b), one can see that due to the increase of $N$ and the dispersion of users' distribution, the UAV's trajectory gradually deviates from the users with the farthest distance. The UAV manages to fly as close as possible to the farthest GU to maximize its video demodulation quality.

Fig. 4(c) shows the convergence of proposed algorithm under different values of $N$. One can find that the converged value has a close relationship with the distribution of GUs. Specifically, the proposed system has similar performance under the cases of $N=6$ and $N=8$ since $u_5$ and $u_8$ have almost the same distance from the BS which suffers from the worst demodulation quality in each case. Fig. 4(d) shows the video demodulation qualities of all GUs under different values of $N$. With the increase of $N$, the GUs' minimum video demodulation quality gradually decreases. This is because with the increase of $N$, the distance between the cell-edge GU and the starting point also gets longer. However, the distance that a UAV can reach is limited by $E_t$ and $K$. Therefore, the video demodulation quality of the edge GU gradually declines as $N$ increases.

\subsection{Performance comparisons with different systems}
In this section, we compare the performance of the proposed system with three other classical video transmission systems, e.g., DVB, SoftCast [7], and SharpCast [8]. In DVB, we generate MPEG4 streams using the H.264/AVC codec provided by the FFmpeg software and the X264 codec library [54]. To ensure that all of the systems occupy the same channel bandwidth, the MPEG4 streams are encoded into the bit streams at ${1}/{3}$ code rate and mapped into complex signals using 16QAM. According to [8], we decompose the video into a content part and a structure part in SharpCast. The structure part is compressed by HEVC codec and is protected with a robust digital transmission scheme. The content part in SharpCast is transmitted in PAVT scheme. We assume that the BS is located at the origin of the coordinate axis in DVB, SoftCast, and SharpCast. For the sake of fairness, the four systems transmit the video signals with the same total energy. The performance of the four systems are compared from the perspectives of both subjective visual quality and objective evaluation metric.
\begin{figure}[htbp!]
\centering
\includegraphics[width=0.5\textwidth]{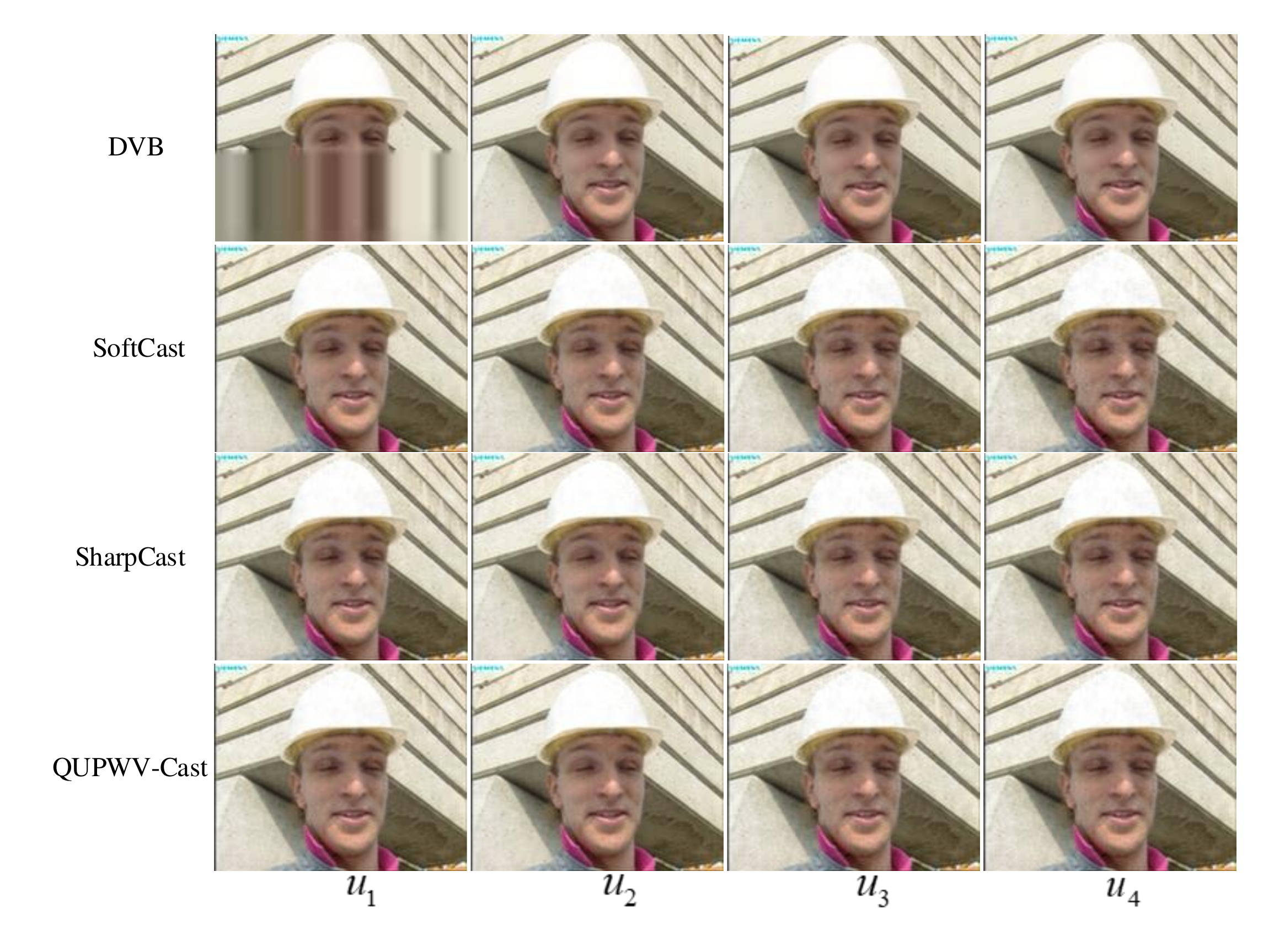}
\caption{Demodulated subjective visual quality of the $1^{st}$ frame using the four systems.}
\label{F1}
\end{figure}
\begin{figure}[htbp!]
\centering
\includegraphics[width=0.5\textwidth]{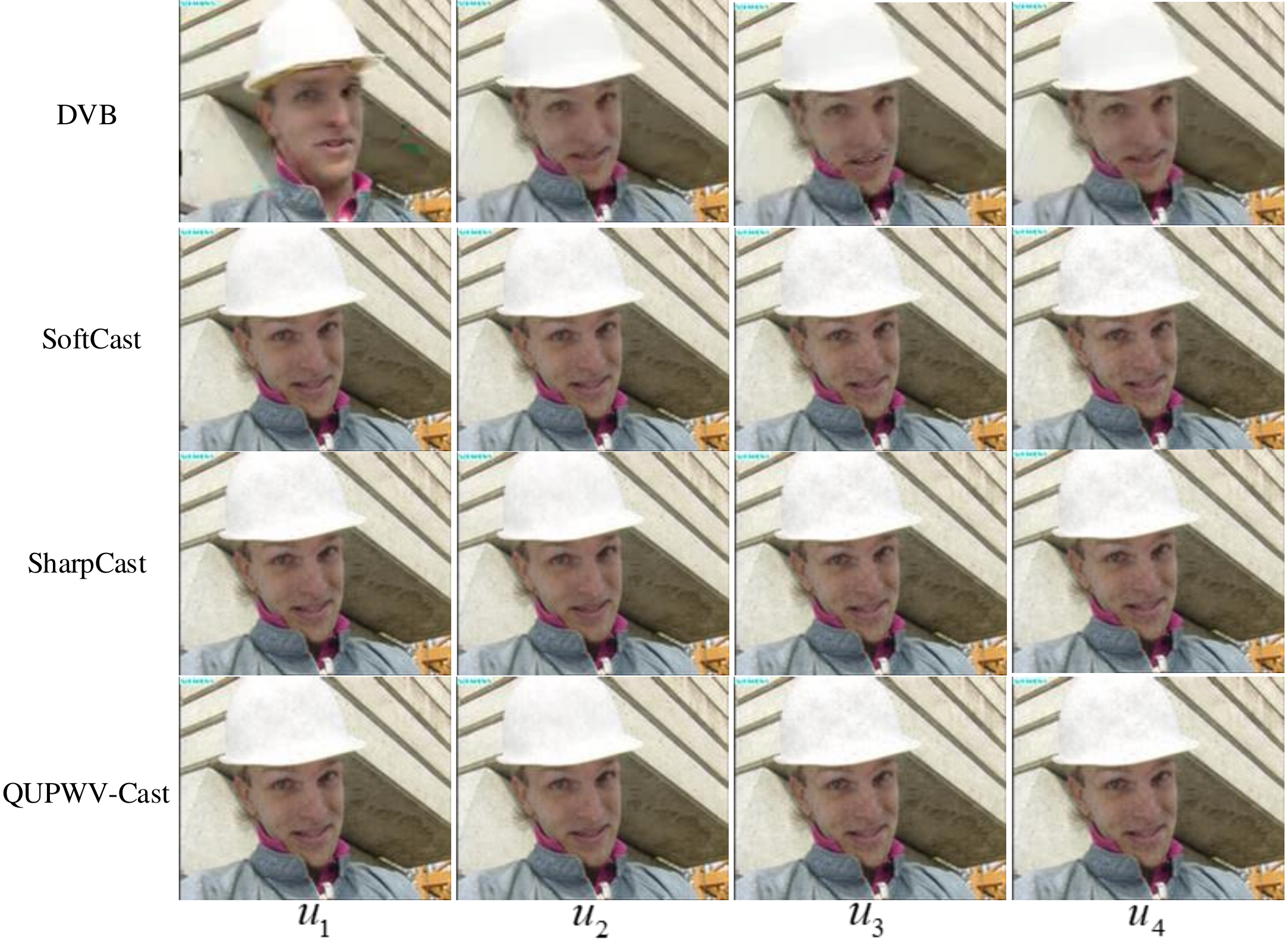}
\caption{Demodulated subjective visual quality of the $150^{th}$ frame using the four systems.}
\label{F1}
\end{figure}
\begin{figure}[htbp!]
\centering
\includegraphics[width=0.5\textwidth]{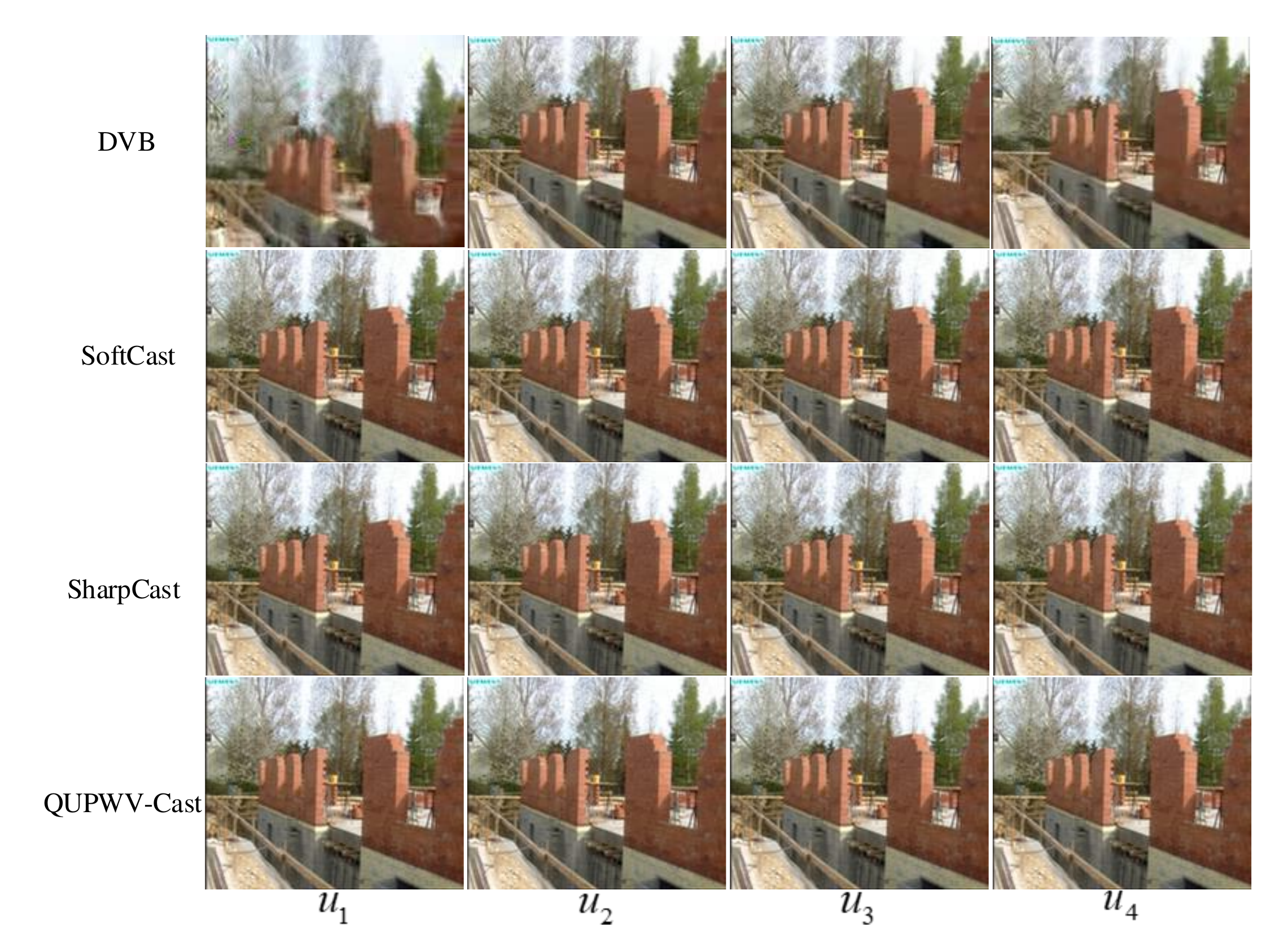}
\caption{Demodulated subjective visual quality of the $300^{th}$ frame using the four systems.}
\label{F1}
\end{figure}

Figs. 5-7 show the demodulated frames of four GUs using the four systems. One can conclude that $u_2$, $u_3$ and $u_4$ can always obtain excellent subjective visual qualities in all the four systems. $u_1$ gets high visual qualities in both SoftCast, SharpCast, and our proposed system. However, DVB can not provide $u_1$ with satisfactory subjective visual qualities compared with three other systems. Specifically, one can see from Figs. 5-7 that frames disorder when $u_1$ demodulates the video using DVB system. This is because DVB adopts the inter-frame compression and motion compensation, resulting in a high correlation between frames\footnote{In order to achieve high compression rate, the video is often divided into I-frames and P-frames in DVB. The demodulation of P-frames often relies on the I-frames.}. The demodulation error of a single frame may result in the loss of several seconds of video clips.

The detailed individual PSNR values of each GU using different testing sequences are provided in \textbf{Table III}. The minimum PSNR value in each method is marked with blue. One can see that the video demodulation quality of each GU using DVB is lower than three other systems. Table III also clearly demonstrates the cliff effect in DVB: when the channel quality is greater than a certain threshold, GUs' video demodulation quality can no longer be improved (the locations of $u_3$ and $u_4$ are different, but their video demodulation qualities are the same). This is due to the inherent loss caused by the lossy compression technology adopted by DVB. Among the four systems, the proposed system can obtain the best performance in terms of the poorest GU's reconstructed video quality (see Eq. (13)). Therefore, the proposed system can overcome the defect of GUs' geographical location with the help of the UAV's mobility.
\begin{table}[htbp!]
\centering
\caption{Individual PSNR values of GUs with different systems.}
\scalebox{1}{
\begin{tabular}{|c|c|c|c|c|c|}
\hline
\multirow{2}{*}{Sequences}& \multirow{2}{*}{Systems} & \multicolumn{4}{c|}{Average PSNR (dB)} \\ \cline{3-6}
& & $u_1$ & $u_2$ & $u_3$ & $u_4$ \\
\hline
\multirow{3}{*}{Foreman} & DVB &{\color{blue}19.50}&33.36&36.81&36.81 \\ \cline{2-6}
&SoftCast&{\color{blue}39.56}&40.57&44.79&43.45 \\ \cline{2-6}
&SharpCast&{\color{blue}40.44}&41.38&45.26&43.99 \\ \cline{2-6}
&QUPWV-Cast&{\color{blue}43.29}&45.15&46.91&51.51 \\ \cline{1-6}
\hline
\multirow{3}{*}{Akiyo} & DVB &{\color{blue}36.45} &40.09 &45.22 &45.22 \\ \cline{2-6}
&SoftCast &{\color{blue}40.65} &41.66 &45.84 &44.51 \\ \cline{2-6}
&SharpCast&{\color{blue}42.04}&43.05&47.22&45.21 \\ \cline{2-6}
&QUPWV-Cast &{\color{blue}44.52} &46.36 &48.11 &52.43 \\ \cline{1-6}
\hline
\multirow{3}{*}{Coastguard} & DVB &{\color{blue}18.76}&33.94 &34.93 &34.93 \\ \cline{2-6}
&SoftCast &{\color{blue}40.28}&41.27 &45.36 &44.07 \\ \cline{2-6}
&SharpCast&{\color{blue}40.99}&41.88&45.66&44.52 \\ \cline{2-6}
&QUPWV-Cast&{\color{blue}44.07} &45.82 &47.66 &51.27 \\ \cline{1-6}
\hline
\multirow{3}{*}{Container} & DVB &{\color{blue}18.71} &27.04 &42.40 &42.40 \\ \cline{2-6}
&SoftCast&{\color{blue}38.38} &39.38 &43.47 &42.18 \\ \cline{2-6}
&SharpCast&{\color{blue}39.37}&40.28&44.24&42.87 \\ \cline{2-6}
&QUPWV-Cast&{\color{blue}42.11} &43.83 &45.77 &49.17 \\ \cline{1-6}
\hline
\multirow{3}{*}{Hall} & DVB &{\color{blue}21.63} &36.07 &41.30 &41.30 \\ \cline{2-6}
&SoftCast&{\color{blue}38.82} &39.81&43.89&42.61 \\ \cline{2-6}
&SharpCast&{\color{blue}39.99}&41.28&44.66&43.88 \\ \cline{2-6}
&QUPWV-Cast&{\color{blue}42.72} &44.45 &46.30 &49.80 \\ \cline{1-6}
\hline
\multirow{3}{*}{Mother-daughter} & DVB &{\color{blue}23.82} &32.02 &43.47 &43.47 \\ \cline{2-6}
&SoftCast &{\color{blue}43.29} &44.29 &48.45&47.13\\ \cline{2-6}
&SharpCast&{\color{blue}45.31}&46.16&49.69&48.58 \\ \cline{2-6}
&QUPWV-Cast &{\color{blue}46.99} &48.80 &50.60 &54.72 \\ \cline{1-6}
\hline
\end{tabular}}
\label{T3}
\end{table}

In order to study the effect of the number of GUs on the system performance, we observe the average PSNR values of GUs in each method under the four cases: $N=\{4,6,8,10\}$ (please note that the GUs has the same coordinates with those in Part D). From Fig. 8, one can see that the proposed system can always achieve the best average PSNR performance in different cases. SharpCast can achieve a performance gain in terms of GUs' average PSNR compared with SoftCast. This is because SharpCast can preserve more structure-related information. DVB performs the worst in terms of the average PSNR among the four systems. Therefore, one can conclude that the PAVT system is more suitable for broadcast scenarios than DVB system. In addition, we can see from Fig. 8 that with the increase of the number of GUs, the average PSNR of GUs in each scheme has not changed much. This is because that we aim at minimizing the minimum PSNR of GUs rather than average PSNR. However, according to our analysis in Part D of Section V, we can conclude that the distribution of GUs rather than the number directly affects the performance of the proposed system.
\begin{figure}[htbp!]
\centering
\includegraphics[width=0.5\textwidth]{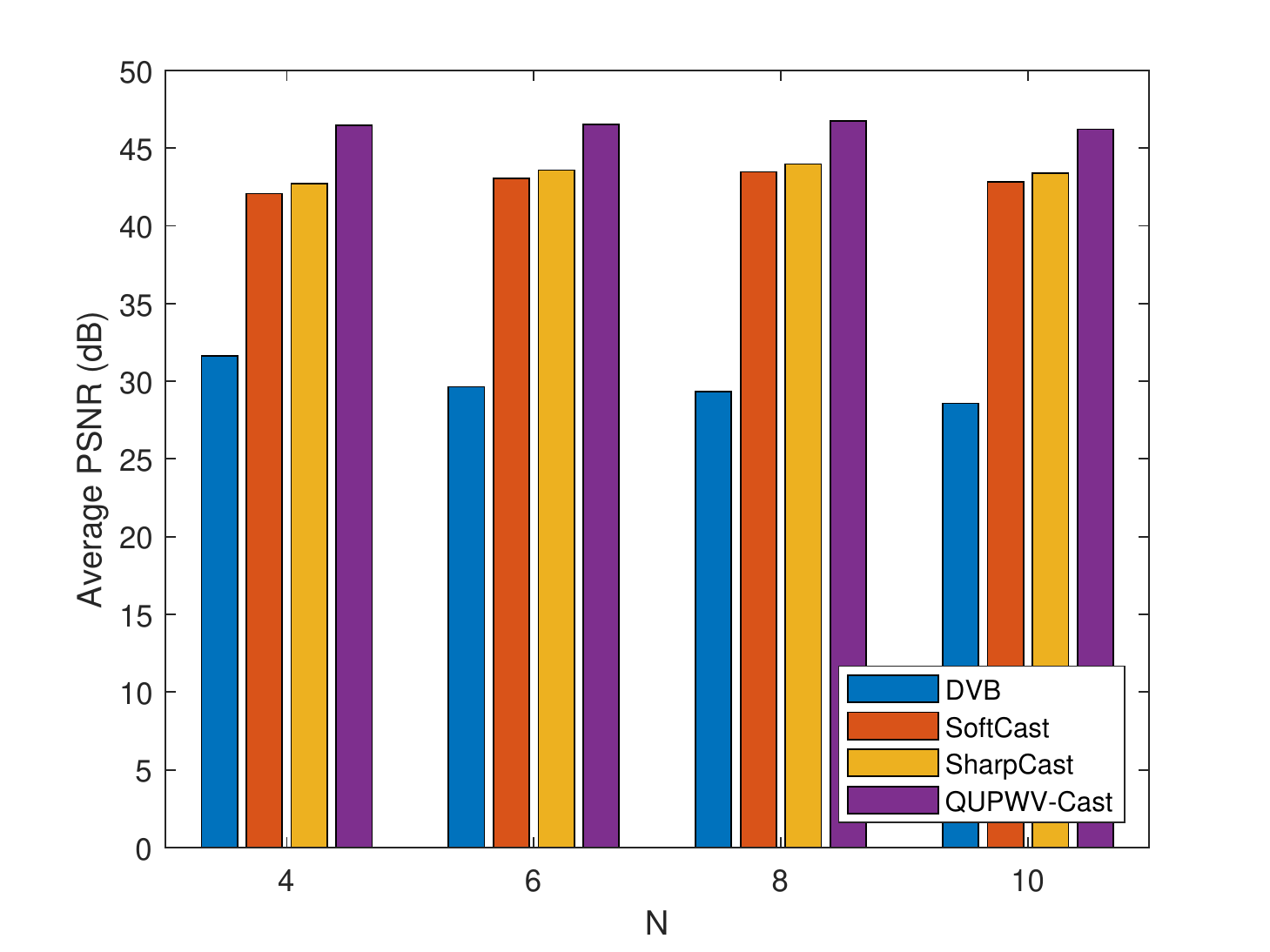}
\caption{The average PSNR under each case with different number of GUs.}
\label{F1}
\end{figure}

\section{Conclusions}
In this paper, a novel QoE-driven UAV-enabled pseudo-analog wireless video broadcast system called QUPWV-Cast, has been proposed to enhance the QoE of cell-edge GUs. The proposed system was modelled as a challenging non-convex optimization problem, aiming at maximizing the minimum PSNR of GUs by jointly optimizing the transmission power allocation strategy and the UAV trajectory. An efficient BCD and SCA-based algorithm was proposed to divide the original optimization problem into two sub-optimal problems. A sub-optimal solution could be obtained by iteratively solving the two sub-optimal problems. Comprehensive simulation results have been provided to prove the effectiveness of the proposed system. The results have shown that the proposed QUPWV-Cast system has the best performance, compared with three other systems, e.g., DVB, SoftCast, and SharpCast, in terms of both subjective visual quality and objective evaluation metric.

\section{Acknowledgement}
The authors would like to thank all reviewers for their efforts in reviewing this manuscript.

\begin{appendices}
\section{}
For ease of proof, we first rewrite the constraint (22) as follows
\begin{equation}\label{E41}
\begin{array}{l}
\frac{{M{\eta ^2}}}{{{\sum\limits_{k = 1}^K {\frac{{{\sigma_0^2}{\lambda _k}{{\left\| {{\bf{q}}[k] - {{\bf{w}}_n}} \right\|}^2}}}{{{\beta _0}{p_k}}}}  + \sum\limits_{m = K + 1}^M {{\lambda _m}} }}} \ge {10^{\frac{\mu }{{10}}}},\forall n.
\end{array}
\end{equation}

In Eq. (41), it is obvious that the right-hand side is convex. Therefore, we only need to prove that the left-hand side of Eq. (41) is concave. For simplicity, we make the following definition
\begin{equation}\label{E42}
\varphi_n ({p_k}) \buildrel \Delta \over = \frac{{{\gamma _0}}}{{\left( {\sum\limits_{k = 1}^K {\frac{{{\omega _n}[k]}}{{{p_k}}}} } \right) + {\gamma _1}}},\forall k, n,
\end{equation}
where ${\gamma _0} \buildrel \Delta \over = {{M{\eta ^2}}}$, ${\gamma _1} \buildrel \Delta \over = \sum\limits_{m = K + 1}^M {{\lambda _m}}$, and  ${\omega_n}[k] \buildrel \Delta \over = \frac{{{\sigma ^2}{\lambda _k}{{\left\| {{\bf{q}}[k] - {{\bf{w}}_n}} \right\|}^2}}}{{{\beta _0}}},\forall k,n$, respectively. The first-order and second-order partial derivatives of $\varphi_n$ with respect to ${p_k}$ can be denoted as follows
\begin{equation}\label{E43}
\frac{{\partial \varphi_n }}{{\partial {p_k}}} = \frac{{{\gamma _0}{\omega_n}[k]}}{{p_k^2{{\left( {\left( {\sum\limits_{k = 1}^K {\frac{{{\omega_n}[k]}}{{{p_k}}}} }\right) + {\gamma_1}} \right)}^2}}},\forall k,
\end{equation}
\begin{equation}\label{E42}
\begin{array}{l}
\frac{{{\partial ^2}\varphi_n }}{{\partial p_k^2}}\!=\!\frac{{-2{\gamma _0}{\omega _n}[k]\!\left(\!{\sum\limits_{k = 1}^K\!{\frac{{{\omega_n}[k]}}{{{p_k}}}}\!+\!{\gamma _1}}\!\right)\!\left(\!{{p_k}\!\left(\!{\sum\limits_{k = 1}^K\!{\frac{{{\omega _n}[k]}}{{{p_k}}}}\!+\!{\gamma _1}} \right)\!-\!{\omega _n}[k]}\!\right)\!}}{{p_k^4\left( {\left(\sum\limits_{k = 1}^K {\frac{{{\omega _n}[k]}}{{{p_k}}}}\right)  + {\gamma _1}} \right)}^4},\forall k,n.
\end{array}
\end{equation}
In Eq. (44), the inequality ${p_k}\left( {\left( {\sum\limits_{k = 1}^K {\frac{{{\omega_k[n]}}}{{{p_k}}}} } \right) + {\gamma _1}} \right) - {\omega _k[n]} \ge {p_k}\left( {\frac{{{\omega_k[n]}}}{{{p_k}}} + {\gamma _1}} \right) - {\omega_k[n]} = {\gamma _1} \ge 0$ holds. In addition, we have ${-2{\gamma _0}{\omega _n}[k]\!\left(\!{\sum\limits_{k = 1}^K\!{\frac{{{\omega_n}[k]}}{{{p_k}}}}\!+\!{\gamma _1}}\!\right)}<0$. Therefore, it is true that $\frac{{{\partial ^2}\varphi }}{{\partial p_k^2}} < 0$. Consequently, $\varphi_n$ is a concave function with respect to ${p_k}$.
\vspace{-2mm}
\section{}
We can prove the correctness of Lemma 2 by contradiction. If Lemma 2 doesn't hold, then we can always fix other variables and only increase $o[k]$ to make $||{\bf{v}}[k]|| = o[k],\forall k$ hold without changing the optimal solution to {\emph{P}}5.
\end{appendices}

\begin{IEEEbiography}[{\includegraphics[width=1in,height=1.25in,clip]{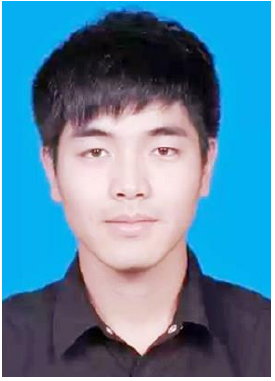}}]{Xiao-Wei Tang} (\emph{S'16, IEEE}) received the B.E. degree in Communication Engineering from Tongji University in 2016, where he is currently pursuing the Ph.D. degree. He has published several research papers on IEEE Transactions on Multimedia, IEEE Transactions on Vehicular Technology, IEEE Access, IEEE Globecom, and Mobile Networks \& Applications. He was a recipient of the Excellent Bachelor Thesis of Tongji University in 2016, the National Scholarship for Graduate Students by Ministry of Education of China in 2017, the Outstanding Students Award of Tongji University in 2017, the Outstanding Freshman Scholarship of Tongji University in 2018, the Chinese Government Scholarship by China Scholarship Council in 2019, the Outstanding Students Award of Tongji University in 2019, and the National Scholarship for Graduate Students by Ministry of Education of China in 2019. From Aug. 2019, he is doing research on UAV-enabled wireless video transmission in the Department of Electrical and Computer Engineering, the National University of Singapore, as a visiting scholar. His research interests include pseudo-analog video transmission, UAV communication, convex optimization, and deep learning.
\end{IEEEbiography}

\begin{IEEEbiography}[{\includegraphics[width=1in,height=1.25in,clip]{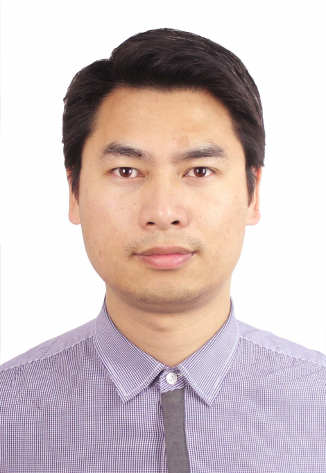}}]{Xin-Lin Huang} (\emph{S'09-M'12-SM'16, IEEE}) is currently a professor and vice-head of the Department of Information and Communication Engineering, Tongji University, Shanghai, China. He received the M.E. and Ph.D. degrees in information and communication engineering from Harbin Institute of Technology (HIT) in 2008 and 2011, respectively. His research focuses on Cognitive Radio Networks, Multimedia Transmission, and Machine Learning. He published over 70 research papers and 8 patents in these fields. Dr. Huang was a recipient of Scholarship Award for Excellent Doctoral Student granted by Ministry of Education of China in 2010, Best PhD Dissertation Award from HIT in 2013, Shanghai High-level Overseas Talent Program in 2013, and Shanghai Rising-Star Program for Distinguished Young Scientists in 2019. From Aug. 2010 to Sept. 2011, he was supported by China Scholarship Council to do research in the Department of Electrical and Computer Engineering, University of Alabama (USA), as a visiting scholar. He was invited to serve as Session Chair for the IEEE ICC2014. He served as a Guest Editor for IEEE Wireless Communications and Chief Guest Editor for International Journal of MONET and WCMC. He serves as IG cochair for IEEE ComSoc MMTC, and Associate Editor for IEEE Access. He is a Fellow of the EAI.
\end{IEEEbiography}

\begin{IEEEbiography}[{\includegraphics[width=1in,height=1.25in,clip]{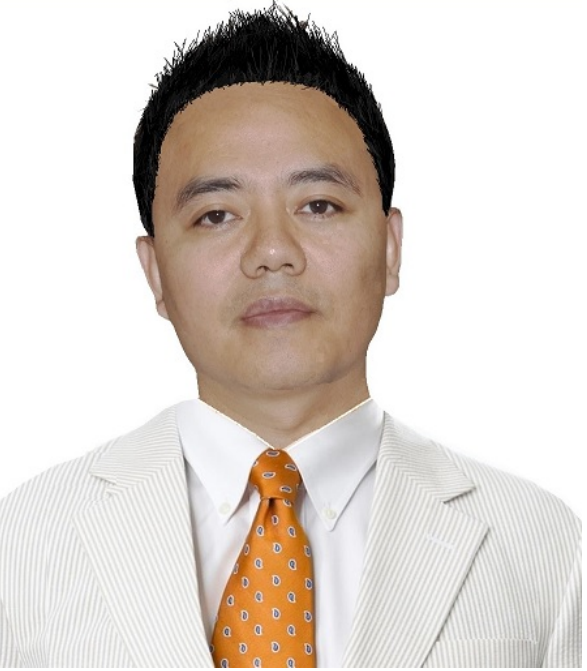}}]{Fei Hu}
(\emph{M'02, IEEE}) received the Ph.D. degree in signal processing from Tongji University, Shanghai, China, in 1999, and the Ph.D. degree in electrical and computer engineering from Clarkson University, Potsdam, NY, USA, in 2002. He is a Professor with the Department of Electrical and Computer Engineering, at the University of Alabama, Tuscaloosa, AL, USA. He has authored more than 200 journal/conference papers and book chapters in wireless networks, security, and machine learning. His research interests include cognitive radio networks, UAV communication, and cyber security.
\end{IEEEbiography}

\vfill

\end{document}